\documentclass[twoside]{article}

%
%


\usepackage[round]{natbib}


\usepackage{soul}
\usepackage{url}
\usepackage[hidelinks]{hyperref}
\usepackage[utf8]{inputenc}
\usepackage[small]{caption}
\usepackage{graphicx}
\usepackage{amsmath,amssymb,amsthm}
\usepackage{booktabs}
\urlstyle{same}

\usepackage{mathtools}

\usepackage{latexsym}
\usepackage{bm}
\usepackage{setspace}
\usepackage{enumerate}
\usepackage{cases}
\usepackage{mathrsfs}
\usepackage{subcaption}
\usepackage{comment}
\usepackage{url}
\usepackage{array}

\usepackage{wrapfig}
\usepackage{tikz}
\usepackage{tikz-3dplot}
\usetikzlibrary{positioning}
\usepackage[normalem]{ulem}

\usepackage{multirow}

\usepackage{cleveref}
\usepackage{autonum}
\allowdisplaybreaks

\usepackage{algorithm} 
\usepackage{algorithmicx}
\usepackage[noend]{algpseudocode}

\def\bma{\begin{bmatrix}}
\def\ema{\end{bmatrix}}
\def\bpma{\begin{pmatrix}}
\def\epma{\end{pmatrix}}

\def\P{{\mathrm P}}

\def\Z{\mathbb Z}
\def\R{\mathbb R}
\def\1{\mathbf 1}
\def\0{\mathbf 0}

\newcommand{\ba}{\begin{align}}
\newcommand{\ea}{\end{align}}

\newcommand{\sign}{\mathop{\mathrm{sign}}\nolimits}

\newcommand{\prox}{\mathop{\mathrm{prox}}\nolimits}

\newcommand{\argmin}{\mathop{\mathrm{argmin}}\limits}

\newcommand{\innprod}[2]{\langle #1, #2 \rangle}

\newcommand{\ceil}[1]{\left\lceil {#1} \right\rceil}
\newcommand{\floor}[1]{\left\lfloor {#1} \right\rfloor}

\newcommand{\supp}{\mathop{\mathrm{supp}}\limits}




\DeclareMathOperator*{\minimize}{minimize}
\DeclareMathOperator*{\maximize}{maximize}
\DeclareMathOperator{\subto}{subject\ to}
\newcommand{\relmid}[1]{\mathrel{#1|}}

\newtheorem{thm}{Theorem}


\theoremstyle{definition}

\renewcommand{\P}{P}
\newcommand{\F}{F}

\newcommand{\minheap}{\text{$\mathrm{MinHeap}$}}
\newcommand{\pop}{\mathrm{pop}}
\newcommand{\push}{\mathrm{push}}

\newcommand{\solver}{\text{${\tt SubtreeSolver}$}}
\newcommand{\desc}{\mathrm{desc}}

\newcommand{\dset}{{[d]}}
\newcommand{\V}{V}

\newcommand{\myle}{\le}
\newcommand{\myg}{>}
\newcommand{\myl}{<}

\newcommand{\low}[1]{{\tt Low}_{#1}}
\newcommand{\sol}[1]{{\tt Sol}_{#1}}

\newcommand{\dmax}{D_{\max}}
\newcommand{\pmin}{\P_{\min}}
\newcommand{\solmin}{\sol{{\min}}}

\newcommand{\U}{U}
\newcommand{\efdom}{{\mathcal{F}}}
\newcommand{\PF}{\mathcal{P}_\efdom}

\newcommand{\Lhuber}{L_{\text{Huber}}}
\newcommand{\xtrue}{x_{\text{true}}}
\newcommand{\xnoise}{x_{\text{noise}}}
\newcommand{\bnoise}{b_{\text{noise}}}
\newcommand{\randn}{\mathcal{N}}

\newcommand{\Tl}{{\mathcal{T}}}
\newcommand{\topknorm}[2]{\|{#1}\|_{{#2},2}}
\newcommand{\xtl}{\tilde{x}}

\newcommand{\Thetarm}{{\mathrm{\Theta}}}

\usepackage[margin=.8in]{geometry}

\begin{document}
	
	%
	
	%
	
	
	\title{
		Best-first Search Algorithm for Non-convex Sparse Minimization
	}
	
	\author{Shinsaku Sakaue 
		\\
		NTT Communication Science Laboratories
		\and 
		Naoki Marumo
		\\
		NTT Communication Science Laboratories}
	
	\maketitle
	
	\begin{abstract}
	Non-convex sparse minimization (NSM), or $\ell_0$-constrained minimization of convex loss functions, is an important optimization problem that has many machine learning applications. NSM is generally NP-hard, and so to exactly solve NSM is almost impossible in polynomial time. As regards the case of quadratic objective functions, exact algorithms based on quadratic mixed-integer programming (MIP) have been studied, but no existing exact methods can handle more general objective functions including Huber and logistic losses; this is unfortunate since those functions are prevalent in practice. In this paper, we consider NSM with $\ell_2$-regularized convex objective functions and develop an algorithm by leveraging the efficiency of best-first search (BFS). Our BFS can compute solutions with objective errors at most $\Delta\ge0$, where $\Delta$ is a controllable hyper-parameter that balances the trade-off between the guarantee of objective errors and computation cost. Experiments demonstrate that our BFS is useful for solving moderate-size NSM instances with non-quadratic objectives and that BFS is also faster than the MIP-based method when applied to quadratic objectives.
	\end{abstract}

\section{INTRODUCTION}

We consider non-convex sparse minimization (NSM) problems formulated as follows: 
\begin{align}
\minimize_{x\in\R^d}\ 
\P(x)  
& & 
\subto\ 
\|x\|_0\le k, 
\label{prob:nsm}
\end{align}
where 
$\|x\|_0$ is the number of non-zeros in $x$ 
and $k\in\Z_{>0}$ is a sparsity parameter. 
We assume $\P:\R^d\to\R$ 
to be a convex function  
(e.g., quadratic, Huber, and logistic losses) 
with $\ell_2$-regularization as detailed in \Cref{section:oracle}.  
The feasible region of NSM is non-convex 
due to the $\ell_0$-constraint, 
hence NSM generally is NP-hard \citep{natarajan1995sparse}. 
NSM is important since it arises in many real-world scenarios 
including 
feature selection~\citep{hocking1967selection} 
and 
robust regression~\citep{bhatia2017consistent}. 
Due to the NP-hardness,
most studies on NSM have been devoted to 
{\it inexact} polynomial-time algorithms, 
which are not guaranteed to find optimal solutions: 
E.g., orthogonal matching pursuit (OMP)~\citep{pati1993orthogonal,elenberg2018restricted}, 
iterative hard thresholding (IHT)~\citep{blumensath2009iterative,jain2014iterative}, 
and  
hard thresholding pursuit (HTP)~\citep{foucart2011hard,yuan2016exact}. 

When solving moderate-size NSM instances 
to make a vital decision, 
the demand for {\it exact} algorithms increases.\footnote{We say an optimization algorithm is {exact} 
	if it is guaranteed to achieve optimal objective values, 
	where we admit small errors 
	such as those arising from the machine epsilon.} 
Moreover, to exactly solve NSM  
is useful for revealing the performance and limitation of the inexact algorithms, 
which helps advance research into NSM. 
These facts motivate us to develop efficient exact algorithms for NSM. 
The main difficulty of exactly solving NSM stems from 
the fact that there are $\Thetarm(d^k)$ non-zero 
patters, or {\it supports}; 
to examine them one by one is prohibitively costly. 
For the case where $P(x)$ is quadratic, 
i.e., $P(x)=\frac{1}{2n}\|b-Ax\|_2^2$ where $b\in\R^n$ and $A\in\R^{n\times d}$, 
\citet{bertsimas2016best} have developed  
an exact algorithm based on quadratic mixed-integer programming (MIP). 
Since MIP solvers (e.g., Gubori and CPLEX) 
employ sophisticated search strategies such as the branch-and-bound (BB) method, 
their method works far more efficiently than the exhaustive search.  

In practice, non-quadratic objective functions are very common;   
e.g., if observation vector $b$ includes outliers, 
we use the Huber loss 
function to alleviate the effect of outliers. 
The MIP-based method 
is inapplicable to such NSM instances since they 
cannot be formulated as quadratic MIP. 
To the best of our knowledge, 
no existing exact NSM algorithms can handle  
general convex functions such as Huber and logistic losses.

In this paper, we develop the first exact NSM algorithm 
that can deal with $\ell_2$-regularized convex objective functions. 
Our algorithm searches for an optimal support based on the best-first search (BFS) \citep{pearl1984heuristics}, 
which is a powerful search strategy including the A* search \citep{hart1968formal}. Since there are few studies on NSM in the field of search algorithms, 
a BFS framework and its several key components for NSM 
remain to be developed; the most important is 
an {\it admissible heuristic}, which appropriately prioritizes candidate supports. 
We develop such a prioritization method, 
called \solver, 
and some additional techniques by utilizing 
the latest studies on continuous optimization methods \citep{liu2017dual,malitsky2018first}. 
Although the two main building blocks of our algorithm, 
BFS and continuous optimization methods, 
are well studied, to develop 
an NSM algorithm by utilizing them 
requires careful discussion as in \Cref{section:bfs,section:oracle}. 	
Below we detail our contributions: 

\begin{itemize}
	\item
	In \Cref{section:bfs}, 
	assuming that \solver\ is available, 
	we show how to search for an optimal support via BFS. 
	Our BFS outputs solutions with objective errors at most $\Delta\ge0$, 
	where $\Delta$ 
	is an input that controls the trade-off between the 
	accuracy and computation cost;  
	BFS is exact if $\Delta=0$, 
	and BFS empirically speeds up as $\Delta$ increases.   
	
	\item 
	In \Cref{section:oracle}, 
	we develop \solver\ that works with 
	$\ell_2$-regularized convex functions. 
	We also develop two techniques that accelerate BFS: 
	Pruning of redundant search space and 
	warm-starting of \solver. 
	Although pruning is common in the area of 
	heuristic search, 
	how to apply it to NSM is non-trivial. 
	Experiments in \Cref{app:ablation} 
	confirm that BFS greatly speeds up thanks to 
	the combination of the two 
	techniques.
	
	\item 
	In \Cref{section:experiments}, 
	we validate our BFS via experiments.  
	We confirm that BFS can exactly solve NSM instances 
	with non-quadratic objectives, 
	which inexact algorithms fail to solve exactly;  
	this implies optimal solutions of some NSM instances 
	cannot be obtained with other methods than BFS. 
	We also demonstrate that to exactly solve NSM is beneficial in terms of support recovery. 
	Experiments with quadratic objectives 
	show that BFS is more efficient than the MIP-based method 
	that uses the latest commercial solver, Gurobi 8.1.0.
\end{itemize}

\subsection{Related Work}\label{subsec:related}	 
We remark that BFS is different from BB-style methods, 
which MIP solvers often employ.  
While BB needs to examine or prune 
every possible support to guarantee the optimality of output solutions, 
BFS can output optimal solutions without examining the whole search space. 
This property is obtained from the admissibility of heuristics; 
in our case, it holds thanks to the design of \solver. 
In \Cref{subsec:real}, 
we confirm that our BFS can run faster than the MIP-based method 

Our work is also different from 
previous studies that consider similar settings. 
\citet{huang2018constructive} 
studied the $\ell_0$-{\it penalized} 
minimization of {\it quadratic} objectives, 
while 
we consider $\ell_0$-{\it constrained} minimization 
with {\it $\ell_2$-regularized convex} objectives. 
MIP approaches to other penalized settings 
are studied in \citep{miyashiro2015mixed,sato2016feature}, 
but the $\ell_0$-constrained setting is not considered. 
\citet{bourguignon2016exact} 
studied 
{\it Big-$M$}-based MIP formulations 
of sparse optimization whose objective functions are 
given by the $\ell_p$-norm. 
Unlike our BFS and the MIP approach \citep{bertsimas2016best}, 
their method requires the assumption that 
no entry of an optimal solution is larger in absolute value than predetermined $M>0$. 
Furthermore, 
our BFS accepts objective functions 
other than those written with the $\ell_p$-norm. 
\citet{bertsimas2017sparse} in a preprint have recently proposed 
a cutting-plane-based MIP approach; 
as with the previous MIP approach \citep{bertsimas2016best}, 
however, it is accepts only quadratic objectives. 
\citet{karahanoglu2012astar} 
and 
\citet{arai2015optimal}
proposed 
A* algorithms for {compressed sensing} 
and {column subset selection}, respectively; 
our problem setting and design of the heuristic are different 
from those in their works. 


\section{BFS FRAMEWORK}\label{section:bfs}
We first define the {\it state-space tree}, 
on which we search for an optimal support, 
and we then show how to perform BFS on the tree. 
The state-space tree is often used in reverse search~\citep{avis1996reverse}, 
and similar notions are used for 
{submodular} maximization~\citep{chen2015filtered,sakaue2018accelerated}.

Let $\dset\coloneqq\{1,\dots,d\}$ be the index set of $x\in\R^d$; 
we take  
the elements in $\dset$ to be totally ordered as $1\myl2\myl\cdots\myl d$.      
Given any $S\subseteq \dset$, we let $\max S\in \dset$ denote 
the largest element in $S$.  
For any $x\in\R^d$, 
we let 
$\supp(x)\subseteq \dset$	
denote 
the {support} of $x$, 
which is the set of indices 
corresponding to the non-zero entries of $x$. 
We let $x^*$ be an optimal solution to problem~\eqref{prob:nsm}.

\subsection{State-space Tree}
We define the state-space tree $G=(\V,E)$ as follows. 
The node set is given by 
\begin{equation}
\V\coloneqq\{S\subseteq \dset \relmid{} |S|\le k \text{ and } k - |S| \le d-\max S \},
\end{equation}
and 
$S,T\in \V$ are connected by directed edge $(S,T)\in E$ 
iff $S=T\backslash \{ \max T\}$.  
Roughly speaking, 
each $S\in\V$ 
represents an index subset 
such that the corresponding entries 
are allowed to be non-zero.  
Since $\V$ includes 
all $S$ of size $k$, 
any support of size at most $k$ is included in 
some $S\in\V$.  
Let $\desc(S)\subseteq \V$ denote 
a node subset that comprises 
$S$ and all its descendants. 
\Cref{fig:tree} shows an example state-space tree. 	
While 
the size of the tree, $|\V|$, is exponential in $k$, 
BFS typically requires only a very small fraction of $\V$ on-demand, 
which we will experimentally confirm in \Cref{subsubsec:cost}.  


\begin{figure}[tb]
	\centering
	\hspace{-3cm}
	\begin{tabular}{p{0.25\textwidth}}
		\centering
		\begin{tikzpicture}[every node/.style={}]
		\fill[fill=red, opacity=.3, rounded corners]
		(-.9,-.58) -- 
		(.62,-1.6)  -- 
		(.62,-2.85)  -- 
		(-4.5,-2.85) -- 
		(-4.5,-2.38) -- 
		(-1.6,-.58) -- 
		(-.9,-.58) -- 
		(-.14,-1.09) ;

		\node (r) {\uline{$\emptyset$}}; 
		\node[below left = 0.06\linewidth and 0.04\textwidth of r] (o) {\uline{$\{1\}$}};	
		\node[below left  = 0.06\linewidth and 0.02\textwidth of o] (ot) {\uline{$\{1,2\}$}};
		\node[below left = 0.06\linewidth and 0.001\textwidth of ot] (otth) {$\{1,2,3\}$};
		\node[below right = 0.06\linewidth and 0.001\textwidth of ot] (otf) {\uline{$\{1,2,4\}$}};
		\node[below right = 0.06\linewidth and 0.02\textwidth of o] (oth) {\uline{$\{1,3\}$}};
		\node[below = 0.06\linewidth of oth] (othf) {\uline{$\{1,3,4\}$}};
		\node[below right = 0.06\linewidth and 0.04\textwidth of r] (t) {$\{2\}$};
		\node[below = 0.06\linewidth of t] (tth) {$\{2,3\}$};
		\node[below = 0.06\linewidth of tth] (tthf) {$\{2,3,4\}$};
		
		\foreach \u / \v in {r/o,r/t,o/ot,o/oth,ot/otth,ot/otf,oth/othf,t/tth,tth/tthf}
		\draw[->, thick] (\u) -- (\v); 
		\end{tikzpicture}
	\end{tabular}
	\caption{State-space Tree 
		with $(d,k) = (4,3)$.  
		The nodes in the area shaded in red form $\desc(\{1\})$. 
		If $\supp(x^*)=\{1,4\}$, 
		$\V^*$ comprises the nodes 
		with the underlined labels.}
	\label{fig:tree}
\end{figure}
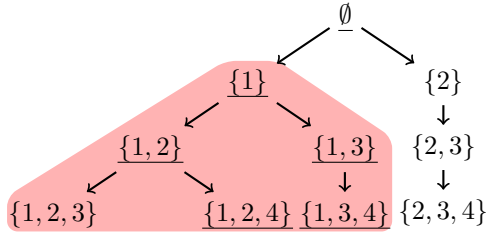

\subsection{Best-first Search on State-space Tree}\label{subsection:bfs} 
For any $S\in V$, 
we define 
\begin{equation}\label{def:u}
\U(S)\coloneqq\{ x\in\R^d \relmid{} 
\supp(x)\subseteq S^\prime,\ 
\exists S^\prime\in\desc(S)  \},  
\end{equation} 
which comprises all feasible solutions 
whose support is included in some $S^\prime\in\desc(S)$. 
Note that
\begin{equation}\label{eq:vstar}
\V^*\coloneqq\{S\in\V\relmid{} x^*\in\U(S)\}
\end{equation}
induces a subtree 
whose root is $\emptyset$ 
and leaves $S\in\V$ satisfy $|S|=k$ 
(see, \Cref{fig:tree}); 
we use this fact to prove the exactness of BFS.   
With BFS, 
starting from root $\emptyset\in\V$, 
we search for 
(a superset of) 
the optimal support, $\supp(x^*)$, 
in $G$ in a top-down 
manner.  
Since it is too expensive to examine all nodes in $G$, 
we reduce the search effort by 
appropriately prioritizing 
candidate nodes. 
We define set function $F:V\to\R$ as  
\begin{align}
\F(S)\coloneqq
\min_{x\in \R^d}
\{
P(x) \relmid{} 
x\in\U(S)
\}.
\label{prob:main}
\end{align} 
Note that  $\F(S)\ge\P(x^*)$ holds for any $S\in\V$ and that  
\begin{align}
\F(S)=\P(x^*) & & \text{$\forall S\in\V^*$}
\label{eq:f_eq_p}
\end{align} 
holds. 
Ideally, 
if we could use $F(S)$ as a priority value of $S\in V$, 
we could find $x^*$ without examining any redundant search space.  
However, to compute $F(S)$ is NP-hard in general.  
Thus we consider using an estimate of $F(S)$ 
that can be computed efficiently. 
As is known in the field of heuristic search~\citep{dechter1985generalized}, 
such estimates must satisfy certain conditions 
to guarantee the exactness of BFS.  
We assume that such estimates, as  well as candidate solutions, 
can be computed via \solver{} that satisfies the following requirements: 
$\solver(S)$ must output 
an estimate $\low{S}\in\R$ of $F(S)$ 
and 
a feasible solution $\sol{S}\in \R^d$ (i.e.,  $\|\sol{S}\|_0\le k$) 
that satisfy  
\begin{align}
&\low{S}\le \F(S) & &\text{$\forall S\in \V$}, 
\label{ineq:admissible} \\
&\P(\sol{S}) \le \low{S}+\Delta
& &\text{$\exists S\in \V^*$}. 
\label{eq:complete}
\end{align} 
\Cref{alg:bfs} describes BFS that uses \solver. 
Akin to the admissible heuristics of A* search, 
$\low{S}$ must lower bound $F(S)$ as in~\eqref{ineq:admissible}. 
Condition~\eqref{eq:complete} guarantees that 
BFS terminates in Step~\ref{step:return_sols}. 
In \Cref{section:oracle}, 
we develop $\solver$ satisfying \eqref{ineq:admissible} and \eqref{eq:complete}. 
As in \Cref{alg:bfs}, 
all examined $S\in\V$, 
as well as $\low{S}$ and $\sol{S}$,  
are 
maintained with \minheap, 
and they are prioritized with 
their $\low{S}$ values.   
In each iteration, $S$, the maintained node with the smallest 
$\low{S}$, 
is popped from \minheap, and then all its children are 
examined and pushed onto \minheap\ if not pruned in Step~\ref{step:detect_pruning}.  

\begin{algorithm}[t]
	{\fontsize{9pt}{10pt}\selectfont
		\caption{BFS for NSM}\label{alg:bfs}
		\begin{algorithmic}[1]
			\State $\low{\emptyset},\sol{\emptyset}
			\gets\solver(\emptyset)$
			\State $\minheap.\push(\low{\emptyset}, 
			\langle 
			\low{\emptyset}, \sol{\emptyset}, \emptyset
			\rangle
			)$
			\State 
			$\solmin\gets\sol{\emptyset}$
			and 
			$\pmin\gets P(\solmin)$
			\While{\minheap\ is not empty}
			\State 
			$\langle \low{S}, \sol{S}, S \rangle
			\gets \minheap.\pop()$
			\If{$\P(\sol{S})\le\low{S}+\Delta$} 
			\label{step:termination}
			\State \Return $\sol{S}$ 
			\label{step:return_sols}
			\EndIf
			\For{each $T$ such that $(S,T)\in E$}
			\State Start $\solver(T)$ to get $\low{T}$ and $\sol{T}$. 
			\If{$\low{T}>\pmin$ is detected} \Comment{Pruning}
			\label{step:detect_pruning}
			\State Force-quit $\solver(T)$. 
			\label{step:forcequit}
			\Else
			\State $\minheap.\push(\low{T}, 
			\langle \low{T}, \sol{T}, T \rangle )$
			\If{$P(\sol{T})<\pmin$}
			\State $\solmin\gets\sol{T}$ and $\pmin\gets P(\solmin)$
			\EndIf
			\EndIf
			\EndFor
			\EndWhile
		\end{algorithmic}
	}
\end{algorithm}

As in~\citep{hansen2007anytime}, 
we can use $\solmin$, the best current solution, 
to prune redundant search space.
More precisely, 
in Steps~\ref{step:detect_pruning} and~\ref{step:forcequit}, 
if $\low{T}>\pmin$ is detected while executing 
$\solver(T)$, 
we force-quit $\solver$ 
to reduce computation cost, 
and 
we never examine the redundant search space, 
$\desc(T)$.  
How to detect $\low{T}>\pmin$ depends on the design of $\solver$; 
we explain it in \Cref{subsec:acceleration}. 

Assuming that \solver\ is available, 
the exactness of \Cref{alg:bfs} can be proved as follows:
\begin{thm}\label{theorem:bfs} 
	For any $\Delta>0$, 
	\Cref{alg:bfs} outputs a feasible $x$ that satisfies $P(x)\le P(x^*)+\Delta$. 
\end{thm}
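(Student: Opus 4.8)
The plan is to combine three ingredients: termination, a ``frontier'' invariant that keeps a node of $\V^*$ in the min-heap at all times, and the lower-bounding property~\eqref{ineq:admissible} together with the equality~\eqref{eq:f_eq_p}. Termination is immediate: $G=(\V,E)$ is a finite tree and every $T\in\V$ has a unique parent, so $T$ is examined in the inner \textbf{for} loop exactly once and pushed onto the heap at most once; hence the \textbf{while} loop runs at most $|\V|$ times, and the algorithm either returns at Step~\ref{step:return_sols} or empties the heap.

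Next I would fix a node $S_0\in\V^*$ for which~\eqref{eq:complete} holds and let $\emptyset=A_0,A_1,\dots,A_m=S_0$ be the (unique) root-to-$S_0$ path in $G$. Because $\V^*$ induces a subtree rooted at $\emptyset$, every $A_i$ lies in $\V^*$, so~\eqref{ineq:admissible} and~\eqref{eq:f_eq_p} give $\low{A_i}\le\F(A_i)=\P(x^*)$ for all $i$. A crucial preliminary observation is that no node of $\V^*$ is ever pruned: at every moment $\pmin=\P(\solmin)\ge\P(x^*)$ because $\solmin$ is always a feasible point and $x^*$ is optimal, whereas the running lower bound on $\F(T)$ that drives the detection in Step~\ref{step:detect_pruning} is always $\le\F(T)$, hence $\le\P(x^*)\le\pmin$ whenever $T\in\V^*$; so for such $T$ the test ``$\low{T}>\pmin$'' never fires and $\solver(T)$ runs to completion, producing $\low{T},\sol{T}$ that satisfy~\eqref{ineq:admissible} (and~\eqref{eq:complete} when $T=S_0$). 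Using this, I would establish the invariant by induction on the iterations: at the start of every iteration before $S_0$ is popped, exactly one $A_i$ lies in the heap. The base case holds since only $A_0=\emptyset$ is pushed at the outset; for the step, note the children of any off-path node are all off-path, so popping an off-path node does not disturb the current $A_i$, whereas popping $A_i$ with $i<m$ examines its child $A_{i+1}\in\V^*$, which is pushed (not pruned), moving the frontier from $A_i$ to $A_{i+1}$.

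With the invariant established, the heap is non-empty at the start of every iteration until $S_0$ is popped; since the loop runs at most $|\V|$ times, either the algorithm returns at Step~\ref{step:return_sols} first or $S_0$ gets popped, and in the latter case Step~\ref{step:termination} succeeds by~\eqref{eq:complete} and the algorithm returns at Step~\ref{step:return_sols}. So \Cref{alg:bfs} always outputs some $\sol{S}$, which is feasible by the $\solver$ requirement. Finally, suppose the return occurs on the iteration that pops $S$. This iteration is no later than the one popping $S_0$, so by the invariant some $A_i$ is in the heap at its start; since $S$ is popped as the minimum-priority node, $\low{S}\le\low{A_i}\le\P(x^*)$. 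Combining with the test $\P(\sol{S})\le\low{S}+\Delta$ from Step~\ref{step:termination} gives $\P(\sol{S})\le\P(x^*)+\Delta$, which is exactly the assertion.

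The step I expect to be the main obstacle is showing that the pruning and force-quitting in Steps~\ref{step:detect_pruning}--\ref{step:forcequit} never remove a node on the root-to-$S_0$ path---neither directly nor by excising $\desc(T)$ from the search---since this soundness of pruning, together with the admissibility~\eqref{ineq:admissible}, is precisely what lets BFS certify the $\Delta$-approximation without exploring all of $\V$. Everything else is routine bookkeeping about the heap ordering and the finiteness of $G$.
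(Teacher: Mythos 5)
Your proposal is correct and follows essentially the same route as the paper's proof: it first shows no node of $\V^*$ is pruned (via $\pmin\ge\P(x^*)=\F(T)\ge\low{T}$), then that the heap always retains a node of $\V^*$ until termination, and finally chains $\P(\sol{S})\le\low{S}+\Delta\le\low{S^*}+\Delta\le\P(x^*)+\Delta$. Your explicit frontier-invariant induction along the root-to-$S_0$ path merely fills in the detail the paper summarizes as ``\minheap{} always maintains some $S^*\in\V^*$,'' so the two arguments coincide in substance.
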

%


%
\begin{proof}
	Since $\solmin$ is always feasible, 
	for any $T\in\V^*$,   
	\begin{align}
	P(\solmin) 
	&\ge P(x^*) & &\because\text{$\solmin$ is feasible}\\
	&= F(T) & &\because\text{$P(x^*)=F(T)$ from~\eqref{eq:f_eq_p}}\\
	&\ge \low{T} & &\because\text{$F(T)\ge\low{T}$ from~\eqref{ineq:admissible}}
	\end{align}
	holds. 		
	Therefore,  
	no $T\in\V^*$ is pruned in Step~\ref{step:detect_pruning}.  
	Furthermore, since 
	$\V^*$  
	induces a subtree
	such that its root is $\emptyset$ 
	and its leaves $S\in\V^*$ satisfy $|S|=k$,  
	\minheap\ always maintains some $S^*\in\V^*$ until all nodes in $\V^*$ are popped. 
	Therefore, 
	thanks to~\eqref{eq:complete}, 
	BFS 
	always terminates in Step~\ref{step:return_sols} and returns a solution. 
	For any $\sol{S}$ obtained in Step~\ref{step:return_sols}, 
	we have $\|\sol{S}\|_0\le k$ and  
	\begin{align}
	&\P(\sol{S}) \\
	\le&\ \low{S} + \Delta & &\because\text{Termination condition in Step~\ref{step:termination}}\\
	\le&\ \low{S^*} + \Delta & &\because\text{$\low{S}$ is the smallest in \minheap}\\
	\le&\  \F(S^*) + \Delta & &\because\text{$\low{S^*}\le\F(S^*)$ from~\eqref{ineq:admissible}}\\
	=&\ \P(x^*) + \Delta. & &\because\text{$F(S^*)=P(x^*)$ from~\eqref{eq:f_eq_p}}
	\end{align}
	Thus the theorem holds.
\end{proof}
By using $\Delta\ge0$ that is as small as the 
machine epsilon, 
we can obtain an exact BFS. 
As $\Delta$ becomes larger, BFS can terminate earlier. 	
Therefore, we can use $\Delta$ as a hyper-parameter that 
controls the trade-off between    
the running time and accuracy. 
Similar techniques 
are considered in the field of heuristic search \citep{ebendt2009weighted,valenzano2013using}.

\section{SUBTREE SOLVER}\label{section:oracle}  
We develop \solver\    
that satisfies requirements \eqref{ineq:admissible} and~\eqref{eq:complete}. 
Although the BFS framework does not require us to specify the form of $P(x)$, 
we here assume it can be written as 
follows  for designing \solver: 
\begin{equation}
P(x)=L(Ax)+\frac{\lambda}{2}\|x\|^2,
\label{eq:P}
\end{equation}
where 
$L(\cdot)$ is a convex function,  
$A\in\R^{n\times d}$ is a design matrix, 
$\lambda>0$ is a regularization parameter, 
and 
$\|{\cdot}\|$ denotes the $\ell_2$-norm. 
Since $\ell_2$-regularization is often used to prevent over-fitting 
and $L(\cdot)$ accepts various convex loss functions, 
$P(x)$ of form~\eqref{eq:P} appears in many practical problems (see, e.g.,~\citep{liu2017dual}). 
In \Cref{subsec:example}, we list some examples of loss functions. 
We also assume that the following minimization problem can be 
solved exactly for any $S\in\V$: 
\begin{align}\label{prob:convex} 
\minimize_{x\in\R^d}\ \P(x) 
& & 
\subto\ \supp(x)\subseteq S, 
\end{align}
which can be seen as 
unconstrained minimization of a strongly convex function 
with $|S|$ variables. 
If $P(\cdot)$ is quadratic, 
we can solve it by computing a pseudo-inverse matrix. 
Given more general $P(\cdot)$, 
we can use iterative methods such as~\citep{shalev2016accelerated} 
to solve problem~\eqref{prob:convex}.   

\subsection{Computing $\low{S}$ and $\sol{S}$}\label{subsection:diht} 
Let $S\in\V$ be any node and 
define
$s\coloneqq|S|$, 
$S_\myle \coloneqq \{i \in \dset \mid \ i \myle  \max S \}$, 
and
$S_\myg \coloneqq \{i \in \dset \mid \ i \myg \max S \}$; 
note that $(S_\myle, S_\myg)$ forms a partition of $\dset$. 
For any $x\in\R^d$, 
$x_S\in\R^{s}$ denotes a restricted vector consisting of 
$x_i\in\R$ ($i\in S$). 
Similarly, 
$A_S\in\R^{n\times s}$ denotes a sub-matrix of $A\in\R^{n\times d}$ whose 
column indices are restricted to $S$. 
Given any positive integers $j$, $m$, and $z\in\R^m$, 
we define $\Tl_j(z)\in\R^m$ as follows: 
$\Tl_j(z)$ preserves (up to) $j$ entries of $z$ chosen 
in an non-increasing order of $|z_i|$ and 
sets the rest at $0$. 
We let $\topknorm{\cdot}{j}$ denote the top-$j$ $\ell_2$-norm; 
i.e., $\topknorm{z}{j}\coloneqq\|\Tl_j(z)\|$. 
Given any convex function $f:\R^m\to\R$, 
we denote its convex conjugate by
$f^*(\beta)\coloneqq\sup_{y\in\R^m}\{\innprod{\beta}{y}-f(y)\}$. 
We define $\prox_{f}(x)\coloneqq{\argmin}_{y\in\R^m}\left\{f(y)+\frac{1}{2}\|x-y\|^2 \right\}$.

A high-level sketch of \solver{} is provided in \Cref{alg:oracle}, 
which consists of three parts: 
Steps~\ref{step:s_eq_k}--\ref{step:solver_eq}, 
Steps~\ref{step:s_le_k}--\ref{step:solver_le}, 
and 
Steps~\ref{step:else}--\ref{step:solver_comp}. 
Every part computes $\low{S}$ and $\sol{S}$ that satisfy 
\eqref{ineq:admissible} 
and 
$\|\sol{S}\|_0\le k$, 
and the first part is needed to satisfy~\eqref{eq:complete}. 
While the first two parts consider some easy cases, 
the last part deals with the most important case and 
requires careful discussion. 
Below we explain each part separately.

\begin{algorithm}[t]
	{\fontsize{9pt}{10pt}\selectfont
		\caption{$\solver(S)$}\label{alg:oracle}
		\begin{algorithmic}[1]
			\If{$|S|=k$} \label{step:s_eq_k}
			\State $\sol{S}\gets\argmin_{\supp(x)\subseteq S} \P(x)$ 
			and 
			$\low{S}\gets\P(\sol{S})$  \label{step:solver_eq}
			\ElsIf{$|S|+|S_\myg|\le k$} \label{step:s_le_k}
			\State $\sol{S}\gets \!\!\argmin_{\supp(x)\subseteq S\cup S_\myg}\!\!\P(x)$
			and
			$\low{S}\gets\P(\sol{S})$  \label{step:solver_le}
			\Else \label{step:else}
			\State Compute $\low{S}$ and $\sol{S}$ with \Cref{alg:pdal}. \label{step:solver_comp}
			\EndIf
			\State \Return $\low{S}$, $\sol{S}$
		\end{algorithmic}
	}
\end{algorithm}

\paragraph{Steps~\ref{step:s_eq_k}--\ref{step:solver_eq}.} 
If $|S|=k$, 
$F(S)=\min_{\supp(x)\subseteq S}\P(x)$ holds. 
Therefore, 
$\sol{S}$ and $\low{S}$ 
obtained in 
Step~\ref{step:solver_eq} satisfy 
$\|\sol{S}\|_0\le k$ 
and 
$\F(S) = \P(\sol{S}) = \low{S} \le \low{S} + \Delta$. 
Since $\V^*$ always includes some $S$ of size $k$, 
we can guarantee that \solver\ satisfies 
\eqref{eq:complete}. 

\paragraph{Steps~\ref{step:s_le_k}--\ref{step:solver_le}.}
If $|S|+|S_\myg|\le k$, 
then  
$\desc(S) = \{S\cup S^\prime  \relmid{} S^\prime \subseteq S_\myg  \}$, 
and thus $F(S)=\min_{\supp(x)\subseteq S\cup S_\myg}\P(x)$ holds. 
Therefore, 
$\sol{S}$ and $\low{S}$ 
obtained in 
Step~\ref{step:solver_le} satisfy 
$\|\sol{S}\|_0\le k$ 
and 
$\F(S) = \P(\sol{S}) = \low{S}$; 
i.e.,~\eqref{ineq:admissible} holds with equality.

\paragraph{Steps~\ref{step:else}--\ref{step:solver_comp}.}
We consider the case 
where neither $|S|=k$ nor $|S|+|S_\myg|\le k$ holds. 
%
Note that $F(S)$ is defined as the minimum value of non-convex minimization problem~\eqref{prob:main}, whose 
lower bound cannot be obtained with standard convex relaxation; 
e.g., 
an $\ell_1$-relaxation-like approach does not always give 
lower bounds. 
For the case of $\ell_0$-constraint minimization (i.e., $\|x\|_0\le k$), 
\citet{liu2017dual} has provided a technique for 
deriving a lower bound.  
Unlike their case, 
the constraint in~\eqref{prob:main} is given by $x\in U(S)$, 
but we can leverage their idea to obtain a lower bound of $F(S)$. 
From the Fenchel--Young inequality, 
$L(Ax) + L^*(\beta) \geq \innprod{Ax}{\beta}$, we obtain
\begin{align}
\begin{aligned}
F(S) 
&=
\min_{x\in \U(S)}\!
\Big\{
L(Ax) + \frac{\lambda}{2}\|x\|^2 
\Big\}\\
&\ge
\min_{x \in \U(S)}\!
\Big\{
\innprod{Ax}{\beta}
-L^*(\beta) + \frac{\lambda}{2} \|x\|^2
\Big\}\\
&=
- L^*(\beta) - \frac{1}{2\lambda} \|A_S^\top \beta\|^2
- \frac{1}{2\lambda}\|A_{S_>}^\top \beta\|^2_{k-s, 2}\\
&\eqqcolon D(\beta; S)
\end{aligned}
\label{ineq:weakdual}
\end{align}
for any $\beta \in \R^n$.  
Therefore, 
once $\beta$ is fixed, we can use $\low{S} = D(\beta; S)$ 
as a lower bound of $F(S)$. 
\begin{algorithm}[tb]
	{\fontsize{9pt}{10pt}\selectfont
		\caption{Computation of $\low{S}$ and $\sol{S}$}
		\label{alg:pdal}
		\begin{algorithmic}[1]
			\State Initialize $\beta^{0}$, $y^{0}$, $\tau_0$, and $\rho_0$. 
			\Comment{Warm-start} \label{step:initialize}
			\State Let $\theta_0\gets 1$ and 
			fix $\gamma>0$. 
			\label{step:pdal_init}
			\State $\dmax\gets D(\beta^{0};S)$     
			\For{$t=1,2,\dots$}
			\State{$\beta^{t} \gets \prox_{\tau_{t-1}L^*}(\beta^{t-1} - \tau_{t-1}Ay^{t-1})$} \label{step:pdal_beta_update}
			\State $\dmax\gets\max\{\dmax, D(\beta^{t};S)\}$ \label{step:pdal_dmax}			
			\State{$\rho_t \gets \rho_{t-1}(1 + \gamma \tau_{t-1})$}
			\State{$\tau_{t} \gets \tau_{t-1} \sqrt{\frac{\rho_{t-1}}{\rho_t}(1 + \theta_{t-1})}$}
			\Loop \Comment{Linesearch loop}
			\State{$\theta_t \gets \frac{\tau_{t}}{\tau_{t-1}}$}
			\State
			$\bar y^{t} \gets y^{t-1} + \rho_t\tau_{t} A^\top (\beta^{t} + \theta_{t}(\beta^{t} - \beta^{t-1}))$
			\State
			\begin{tabular}{@{}l}
				$y^{t}_S \gets \frac{1}{1 + \lambda \rho_t\tau_{t}} \bar y^{t}_S$, 
				\quad 
				$y^{t}_{S_\leq \setminus S} \gets 0$, 
				\quad 
				and \\
				$y^{t}_{S_>} \gets \prox_{\rho_t\tau_{t} (\frac{1}{2\lambda} \|\cdot\|^2_{k-s,2})^*}(\bar y^{t}_{S_>})$
			\end{tabular} \label{step:prox_topk2}
			\If {
				$\sqrt{\rho_t}\tau_t\|A^\top(y^{t} - y^{t-1})\| \leq \|y^{t} - y^{t-1}\|$
			}
			\State{\textbf{break}}
			\EndIf
			\State{$\tau_t \gets 0.5 \times \tau_t $}
			\EndLoop
			\If{converged} \label{step:pdal_ifconverged}
			\State $\low{S}\gets \dmax$ 
			\State $\sol{S}\gets\argmin_{\supp(x)\subseteq \supp(\Tl_{k}(y^{t}))} \P(x)$  \label{step:sol}
			\State \Return $\low{S}$,  $\sol{S}$ 
			\EndIf
			\EndFor
		\end{algorithmic}
	}
\end{algorithm}
In practice, 
BFS becomes faster as the lower bound becomes larger. 
Thus we consider obtaining a large $D(\beta; S)$ value by 
(approximately) solving the following non-smooth concave maximization problem: 
\begin{align}
\maximize_{\beta \in \R^n} \quad D(\beta; S).
\label{eq:dualProblem}
\end{align}
\Cref{alg:pdal} presents a maximization method for 
problem~\eqref{eq:dualProblem}, 
which is based on 
the primal-dual algorithm with linesearch (PDAL)~\citep{malitsky2018first}. 
We may also use the supergradient ascent as a simple alternative to PDAL; 
we here employ PDAL to enhance scalability of BFS. 
(see, \Cref{app:sga_pdal} for details).
If better methods for problem~\eqref{eq:dualProblem} are available, 
we can use them. 
Below we detail \Cref{alg:pdal}. 
In Step~\ref{step:initialize}, we initialize the parameters with the warm-start method detailed in \Cref{subsec:acceleration}. 
In Step~\ref{step:pdal_init}, 
we let $\gamma>0$ be sufficiently small so that 
the $1/\gamma$-smoothness of $L(\cdot)$ holds, 
while larger $\gamma$ makes PDAL faster.  
How to 
compute $\prox(\cdot)$ (Steps~\ref{step:pdal_beta_update} and~\ref{step:prox_topk2}) 
is detailed in \Cref{app:prox}. 
In Step~\ref{step:sol}, we compute a feasible solution $\sol{S}$ from 
the primal solution, $y^t$.
We now explain how to detect convergence in Step~\ref{step:pdal_ifconverged}, 
which requires us to consider the following two issues: 
(I) 
It would be ideal if we could use the relative error, 
$(D(\beta^{t};S) - D(\beta^{t-1};S))/\max_{\beta \in \R^n} D(\beta;S)$, 
for detecting convergence, 
but the denominator is unavailable. 
(II) 
$D(\beta^{t};S)$ does not always increase with $t$, 
while we want to make the output, $\low{S}=\dmax$, 
as large as possible. 
We first address (I). 
Let $\pmin$ be the best current objective value 
when $\solver(S)$ is invoked, which 
we maintain as in \Cref{alg:bfs}. 
If \Cref{alg:pdal} is not force-quitted by the pruning procedure, 
we always have $D(\beta;S)<\pmin$ as detailed in \Cref{subsec:acceleration}.  
Furthermore,  
\Cref{alg:pdal} aims to maximize $D(\beta;S)$.   
These facts suggest that $\pmin$ would be a good surrogate of $\max_{\beta \in \R^n} D(\beta;S)$. 
Hence we use 
$(D(\beta^{t};S) - D(\beta^{t-1};S))/\pmin\le\epsilon$ 
as a termination condition, where $\epsilon>0$ is a small constant 
that controls the accuracy of \solver.  
This  condition alone is, however, insufficient due to issue (II); 
i.e., 
$D(\beta^{t};S) - D(\beta^{t-1};S)$ can be negative even though 
$\low{S}=\dmax$ is small. 
To resolve this problem, 
we employ an additional termination condition, $D(\beta^{t};S)\ge \dmax$, 
which prevents \Cref{alg:pdal} from outputting small $\low{S}$. 
If both conditions are satisfied, 
we regard the for loop as having converged.

\subsection{Acceleration Techniques}\label{subsec:acceleration}
We present two acceleration techniques: 
a warm-start method and pruning via force-quit, 
which is mentioned in \Cref{subsection:bfs}. 
In \Cref{app:ablation}, 
ablation experiments confirm that the combination of the two acceleration techniques 
greatly speeds up BFS. 

\newcommand{\Sparent}{{S^\prime}}
\paragraph{Warm-start by Inheritance.}
We detail how to initialize 
$\beta^{0}$, $y^{0}$, $\tau_0$, and $\rho_0$ with the warm-start method. 
When executing \Cref{alg:pdal} with $S=\emptyset$ at the beginning of BFS, 
we set 
$\beta^{0}\gets0$, 
$y^{0}\gets0$, 
$\tau_0\gets1/\|A\|_2$, and 
$\rho_0\gets1$, 
where $\|A\|_2$ is the largest singular value of $A$. 
We now suppose that $\Sparent\in\V$ is popped from $\minheap$ 
and that we are about to compute $\low{S}$ and $\sol{S}$ 
with \Cref{alg:pdal}, where $(\Sparent,S)\in E$. 
Since $S$ is obtained by adding only one element, $\max S$, 
to $\Sparent$, 
$D(\beta;\Sparent)$ and $D(\beta;S)$ 
are expected to have 
similar maximizers. 
Taking this into account, we set 
$\beta^{0}$, $y^{0}$, $\tau_0$, and $\rho_0$ 
at those obtained in the last iteration of \Cref{alg:pdal} 
invoked  by \solver($\Sparent$). 
Namely, 
\Cref{alg:pdal} inherits 
$\beta^{0}$, $y^{0}$, $\tau_0$, and $\rho_0$ 
from the parent node to become warm-started.  
We can easily confirm that 
$D(\beta^{0}; \Sparent)\le D(\beta^{0};S)$ 
holds for any $\beta^{0}\in\R^n$.

\paragraph{Pruning via Force-quit.} 
As mentioned in \Cref{subsection:bfs}, 
force-quitting $\solver(T)$ can
accelerate BFS, 
but it involves detecting $\low{T}>\pmin$;     
we explain how to do this. 
Since problem~\eqref{prob:convex} is assumed to be solved efficiently, 
detecting $\low{T}>\pmin$ 
for the cases of 
Steps~\ref{step:s_eq_k}--\ref{step:solver_eq} 
and  
Steps~\ref{step:s_le_k}--\ref{step:solver_le} 
in \Cref{alg:oracle} 
is easy; 
i.e., we check whether $\low{T}=P(\sol{T})>\pmin$ holds or not.  	 
Below we focus on the case of
Steps~\ref{step:else}--\ref{step:solver_comp}.  
While executing \Cref{alg:pdal}, 
once $D(\beta^{t};T)>\P_{\min}$ occurs for some $t$,  
then we have $\low{T}>\P_{\min}$  due to Step~\ref{step:pdal_dmax}. 
In this case, we force-quit \Cref{alg:pdal}, 
and continue BFS without pushing 
$T$ onto \minheap.  

\newcommand{\Llogistic}{L_{\text{logistic}}}
\newcommand{\Lquadratic}{L_{\text{quadratic}}}
\subsection{Examples of Loss Functions}\label{subsec:example}
We detail three examples of convex loss functions $L(\cdot)$: 
quadratic, Huber, and logistic loss functions. 
We will use them in the experiments. 	
All of the functions are defined with 
design matrix $A=[a_1,\dots,a_n]^\top\in\R^{n\times d}$ 
and 
observation vector $b=[b_1,\dots,b_n]^\top\in\R^n$. 

\paragraph{Quadratic Loss.}
The quadratic loss function is 
a widely used loss function defined as   
$
\Lquadratic(Ax)\coloneqq
\frac{1}{2n}\| b - Ax \|^2 
$.
Note that $\Lquadratic(\cdot)$ is $1/n$-smooth, and so we can set $\gamma=n$ 
in \Cref{alg:pdal}.

\paragraph{Huber Loss.}
When observation vector $b$ contains outliers, 
the Huber loss function is known to be effective. 
Given parameter $\delta\ge0$, 
the function  is defined as 
$
\Lhuber(Ax) \coloneqq\frac{1}{n}\sum_{i=1}^n l(a_i^\top x-b_i)
$, 
where 
$l(r)$ is $r^2/2$ if $|r|\le\delta$ 
and $\delta\big(|r|-\delta/2\big)$ otherwise. 
We can confirm that $\Lhuber(\cdot)$ is also $1/n$-smooth.

\paragraph{Logistic Loss.}
When each entry of the observation vector is dichotomous, 
i.e., $b_1,\dots,b_n \in \{-1,1\}$, 
the following logistic loss function is often used: 
$
\Llogistic(Ax)
\coloneqq\frac{1}{n}\sum_{i=1}^n 
(1 + \exp( - b_i \cdot a_i^\top x))
$. 
Note  that $\Llogistic(\cdot)$ is $\frac{1}{4n}$-smooth. 
Although
the proximal operator of this function 
required in \Cref{alg:pdal} 
has no closed expression, 
we can efficiently 
compute it by solving a 1D minimization problem with Newton's method as in~\citep[Appendix A]{defazio2016simple}.

\section{EXPERIMENTS}\label{section:experiments}
We evaluate our BFS via experiments. 
In \Cref{subsec:synthetic}, 
we use synthetic instances with Huber and logistic loss functions; 
we thus confirm that our BFS can solve NSM instances to which 
the MIP-based method is inapplicable. 
With the instances, we examine the computation cost of BFS.  
We also demonstrate that our BFS is useful in terms of support recovery; 
this is the first experimental study that examines 
the support recovery performance of exact algorithms for 
NSM instances with non-quadratic loss functions. 
In \Cref{subsec:real}, 
we use two real-world NSM instances with quadratic loss functions, 
and we demonstrate that BFS can run faster than the MIP-based method 
\citep{bertsimas2016best} with the latest commercial solver, Gurobi 8.1.0, 
which we denote simply by MIP in what follows.

For comparison, we employed three inexact methods: 
OMP~\citep{elenberg2018restricted}, 
HTP~\citep{yuan2014gradient}, 
and dual IHT (DIHT)~\citep{liu2017dual}. 
The precision, $\epsilon$, used 
for detecting convergence in \Cref{alg:pdal} 
(see, the last paragraph in \Cref{subsection:diht}), 
as well as those of 
those of HTP and DIHT, were set at $10^{-5}$.
When solving 
problem~\eqref{prob:convex} with non-quadratic objectives, 
we used  
a primal-dual method based on~\citep{shalev2016accelerated};  
with this method we obtained  
solutions whose primal-dual gap was at most $10^{-15}$. 
We regarded numerical errors smaller than $10^{-12}$ as $0$. 

All experiments were conducted 
on a 64-bit Cent6.7 machine 
with Xeon 5E-2687W v3 3.10GHz CPUs and 128 GB of RAM.  
All methods were executed with a single thread. 
BFS, OMP, HTP, and DIHT 
were implemented in Python 3, 
and    
MIP used Gurobi 8.1.0.

\subsection{Synthetic Instances}\label{subsec:synthetic} 
We consider synthetic NSM instances of sparse regression models; 
we estimate $x\in\R^d$, which has a support of size at most $k$, 
from a sample of size $n$. 
Specifically, 
we created the following instances 
with Huber and logistic loss functions, 
which we simply call Huber and logistic instances, respectively, 
in what follows.

\newcommand{\Strue}{{S_\text{true}}}
\paragraph{Huber Instance: Sparse Regression with Noise and Outliers.}
We created $\Strue\subseteq\dset$ 
by randomly sampling $k$ elements from $\dset$, 
which forms the support of the true sparse solution, $\xtrue$.  
We set the $i$th entry of $\xtrue$ at $1$ 
if $i\in\Strue$ and $0$ otherwise. 
We drew each row of 
$A\in\R^{n\times d}$ from 
a lightly correlated $d$-dimensional normal distribution, 
whose 
mean and correlation coefficient were set at $0$ and $0.2$, respectively.   
We normalized each column of $A$ 
so that its $\ell_2$-norm became $1$.   
We then drew each entry of $\xnoise\in\R^d$ 
from the standard normal distribution, 
denoted by $\randn$, 
and rescaled it so that the signal-noise ratio, 
$\|\xtrue\|/\|\xnoise\|$, became $10$.  
We let $b^*=A(\xtrue+\xnoise)$. 
Analogously, we drew each entry of $\bnoise\in\R^n$ 
from $\randn$ and rescaled it so that $\|b^*\|/\|\bnoise\|=10$ held. 
We then randomly chose $\lfloor 0.1n \rfloor$ 
entries from $\bnoise$ and multiplied them by 10; 
we let $b=b^*+\bnoise$. 
Namely, about $10\%$ entries of $b$ are outliers. 
We used the regularized Huber loss function, 
$\P(x)=\Lhuber(Ax)+\frac{\lambda}{2}\|x\|^2$, 
as an objective function, 
where we let $\delta=1$ and $\lambda=0.001$. 

\paragraph{Logistic Instance: Ill-conditioned Sparse Regression with Dichotomous Observation.}
As with the above setting, we created $\Strue$ and set 
the $i$th entry of $\xtrue$ at $10$ if $i\in\Strue$ and $0$ otherwise. 
We employed ill-conditioned design matrix $A\in\R^{n\times d}$ 
as in \citep{jain2014iterative}: 
We obtained $\hat S\subseteq\dset$ of size $k$ 
by randomly choosing $\ceil{k/2}$ elements from $\Strue$ 
and $\floor{k/2}$ elements from $\dset\backslash\Strue$. 
We then drew each row of $A_{\hat{S}}$ from 
a heavily correlated $k$-dimensional normal distribution 
with mean $0$ and correlation coefficient $0.5$,  
and each row of $A_{\dset\backslash\hat{S}}$ was 
drawn from the above lightly correlated normal distribution of 
dimension $d-k$. 
We then normalized each column of $A$. 
We drew each entry of $b\in\{-1,1\}^n$ from a Bernoulli 
distribution such that $b_i=1$ with a probability of $1/(1+\exp(-a_i^\top \xtrue))$; 
i.e., $\Strue$ represents features that affect the dichotomous outcomes.  
We used the regularized logistic loss function 
$\P(x)=\Llogistic(Ax)+\frac{\lambda}{2}\|x\|^2$ 
as an objective function, 
where we let $\lambda=0.0002$.

\begin{figure}[tb]
	\centering
	\begin{minipage}[t]{0.235\textwidth}
		\includegraphics[width=1.0\linewidth]{./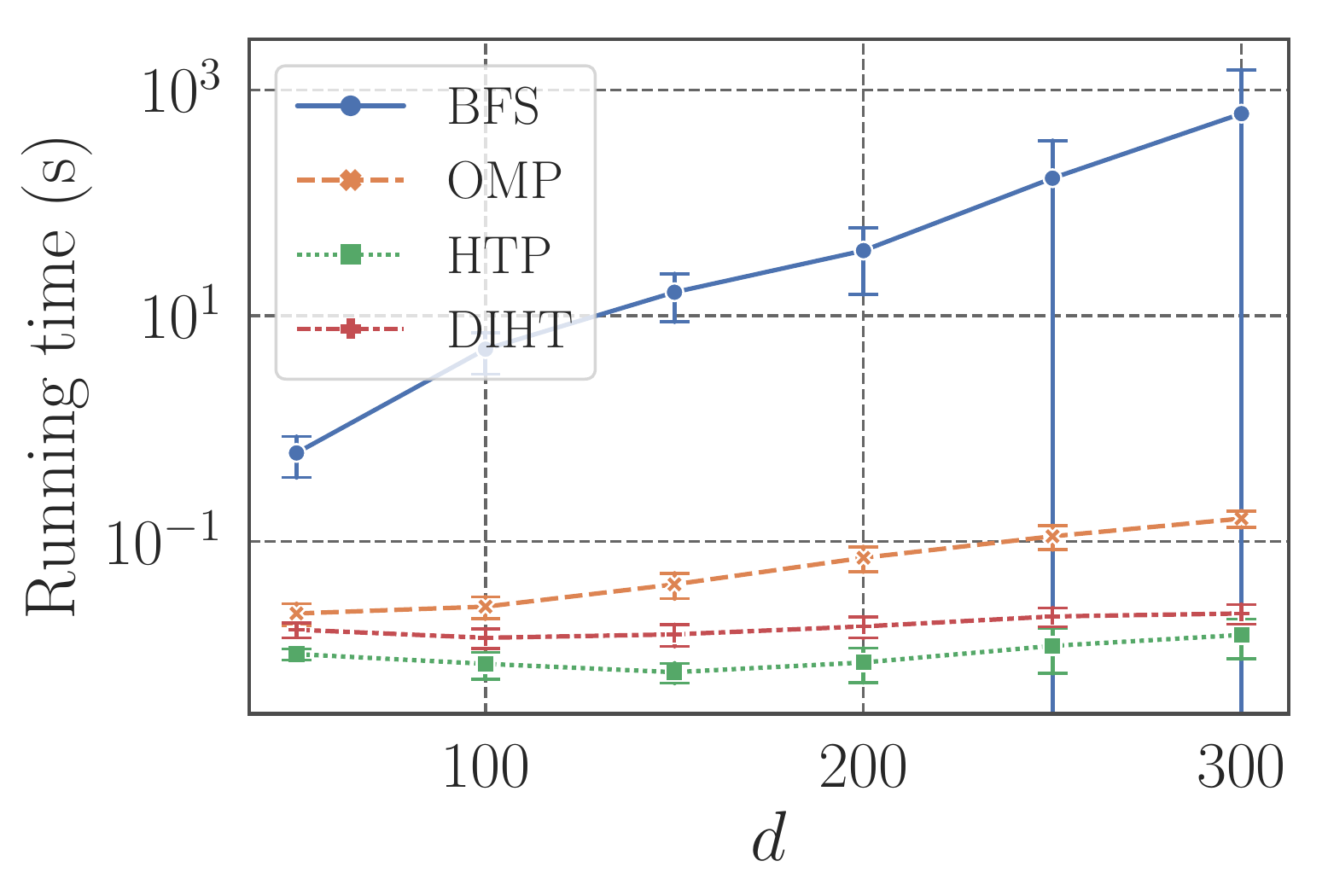}
		\subcaption{Huber, Running Time}
		\label{fig:huber_time}
	\end{minipage}
	\begin{minipage}[t]{0.235\textwidth}
		\includegraphics[width=1.0\linewidth]{./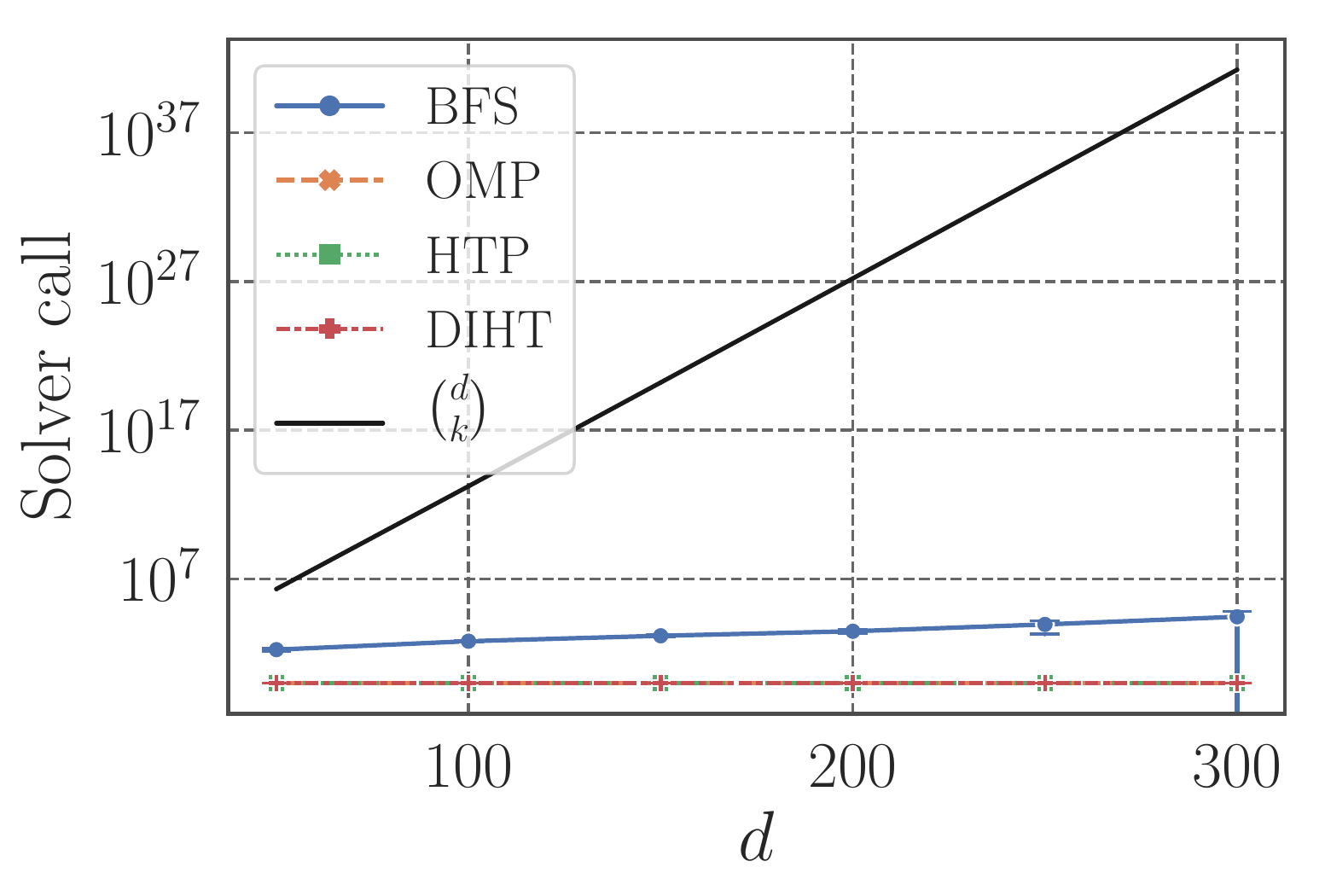}
		\subcaption{Huber, Solver Call}
		\label{fig:huber_call}
	\end{minipage}
	\begin{minipage}[t]{0.235\textwidth}
		\includegraphics[width=1.0\linewidth]{./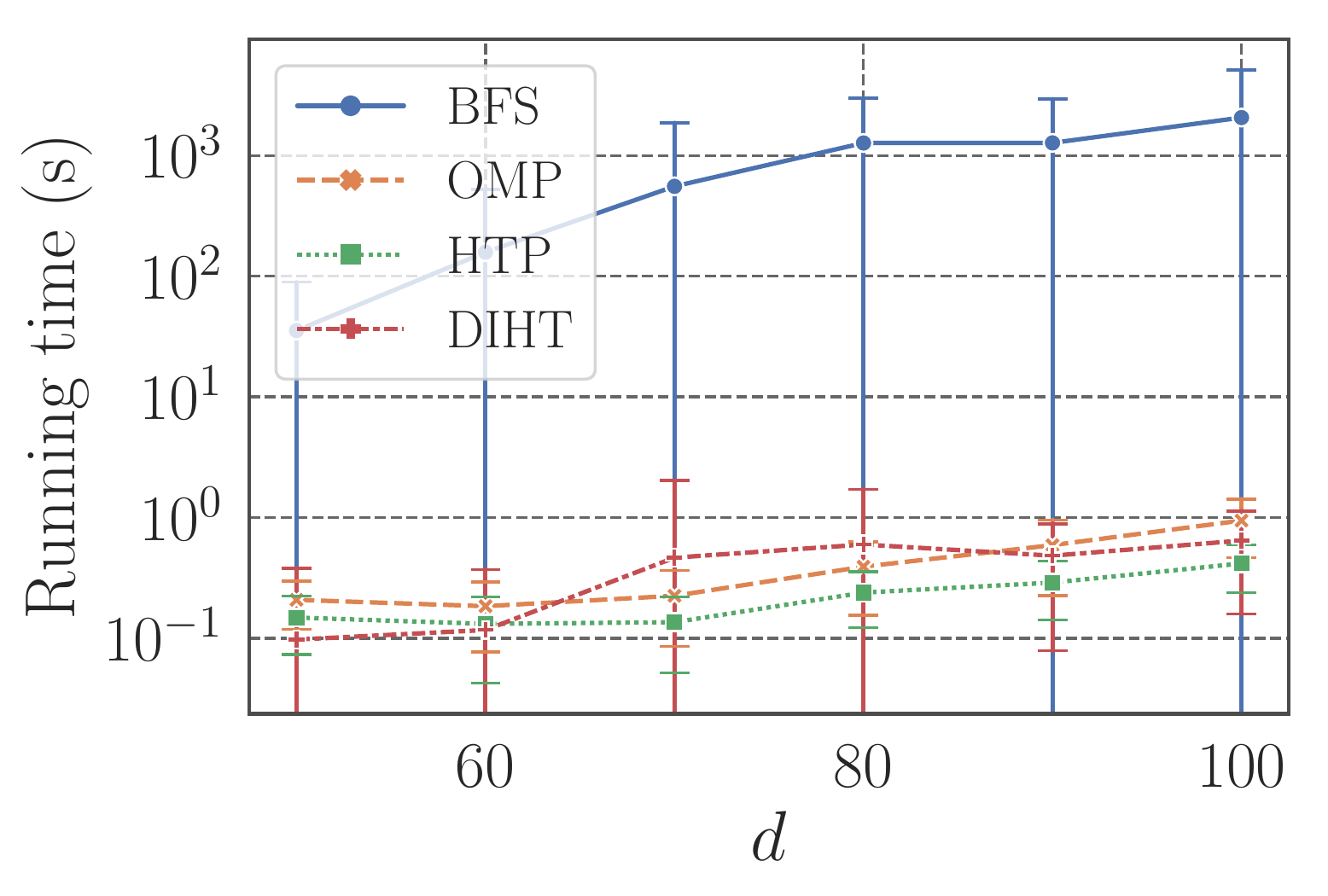}
		\subcaption{Logistic, Running Time}
		\label{fig:logistic_time}
	\end{minipage}
	\begin{minipage}[t]{0.235\textwidth}
		\includegraphics[width=1.0\linewidth]{./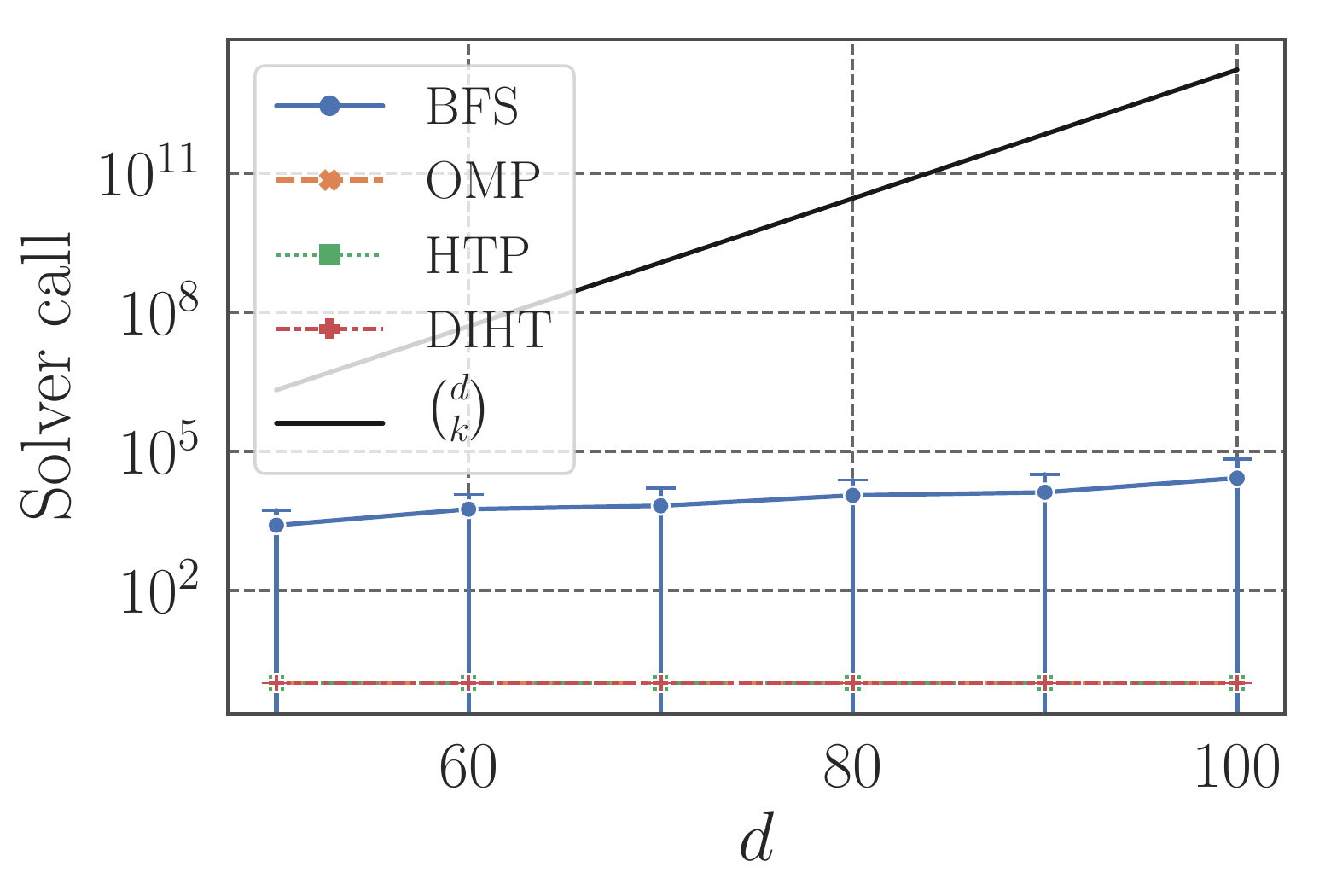}
		\subcaption{Logistic, Solver Call}
		\label{fig:logistic_call}
	\end{minipage}
	\caption{
		Running Times and Solver Calls. 
		$\binom{d}{k}$ corresponds to the solver call of 
		exhaustive search.
	} 
	\label{fig:syn_plot}
\end{figure}

\subsubsection{Computation Cost}\label{subsubsec:cost}
We created $100$ random Huber and logistic instances 
with $d=50,100,\dots,300$ and $d=50,60,\dots,100$, 
respectively.  
We let $k=0.1k$ and $n=\lfloor 10k\log d \rfloor$. 
\Cref{fig:syn_plot} shows 
the running time  
and solver call, 
which indicates the number of times \solver\ is executed, of each method. 
The solver calls of the inexact methods are regarded as $1$.
Each curve and error bar indicate the mean and standard deviation 
calculated over $100$ instances.  	
For comparison, 
we present the $\binom{d}{k}$ values, 
which correspond to the solver calls of a naive exhaustive search 
that solves problem~\eqref{prob:convex} $\binom{d}{k}$ times. 
We see that BFS is far more efficient than the exhaustive search, 
which is too expensive to be used in practice. 
Note that the size of the state-space tree, $|V|$, 
is at least $\binom{d}{k}$; 
hence the results of solver calls confirm that 
BFS examines only a very small fraction of the tree. 
Furthermore, 
although BFS is slower than the inexact methods on average, 
BFS can sometimes run very fast as indicated by the error bars. 
We also counted the number of solved instances for each method: 
While BFS solved all the $600$ Huber instances and $600$ logistic instances, 
OMP, HTP, and DIHT 
solved 
598, 172, and 233 Huber instance, respectively, 
and 
464, 7, and 82 logistic instances, respectively. 
Note that these results regarding the inexact methods are obtained 
thanks to BFS, 
which always provides optimal solutions 
and enables us to see whether solutions obtained with 
inexact methods are optimal or not.

\begin{figure}[tb]
	\centering
	\begin{minipage}[t]{0.235\textwidth}
		\includegraphics[width=1.0\linewidth]{./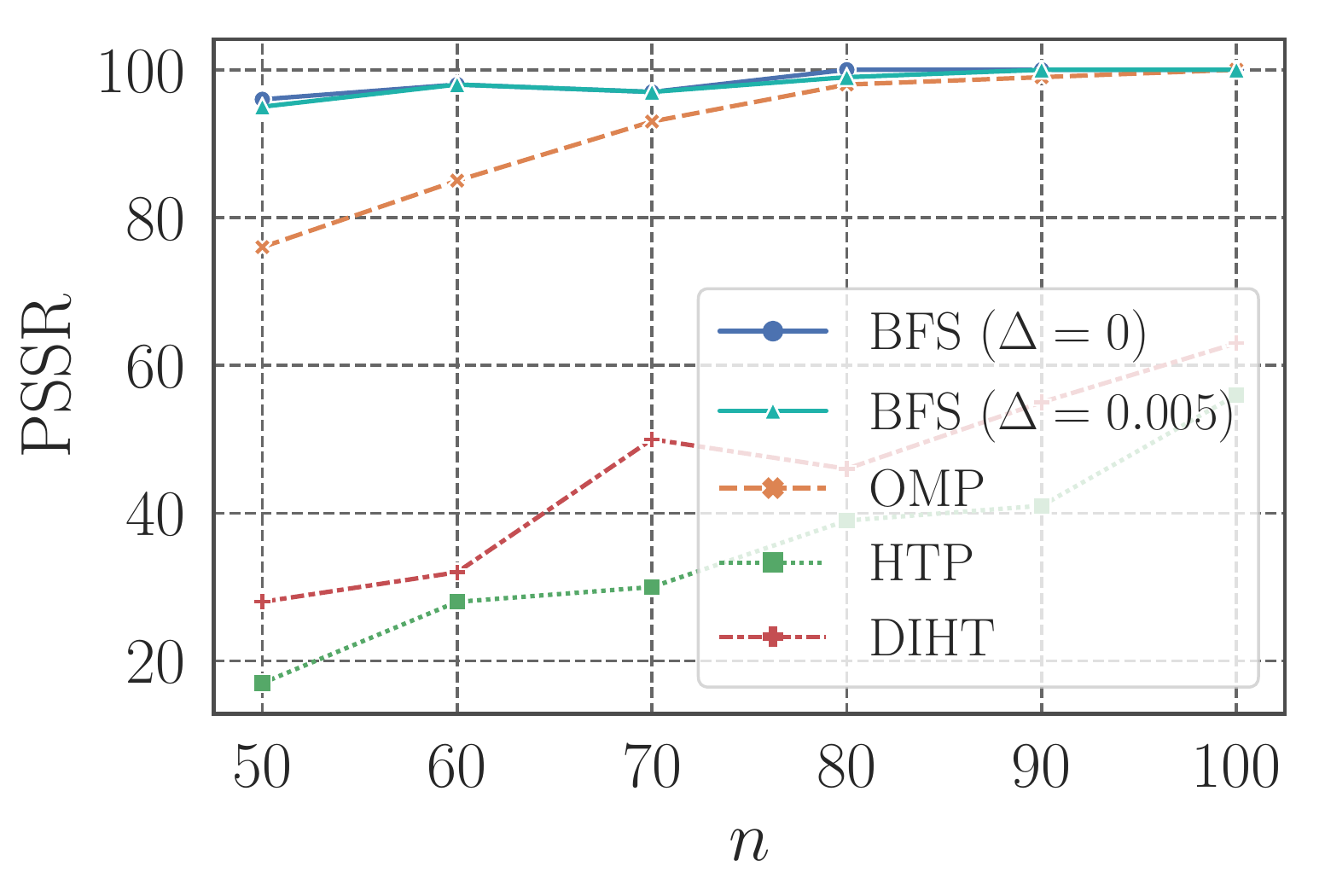}
		\subcaption{Huber, PSSR}
		\label{fig:huber_pssr_w}
	\end{minipage}
	\begin{minipage}[t]{0.235\textwidth}
		\includegraphics[width=1.0\linewidth]{./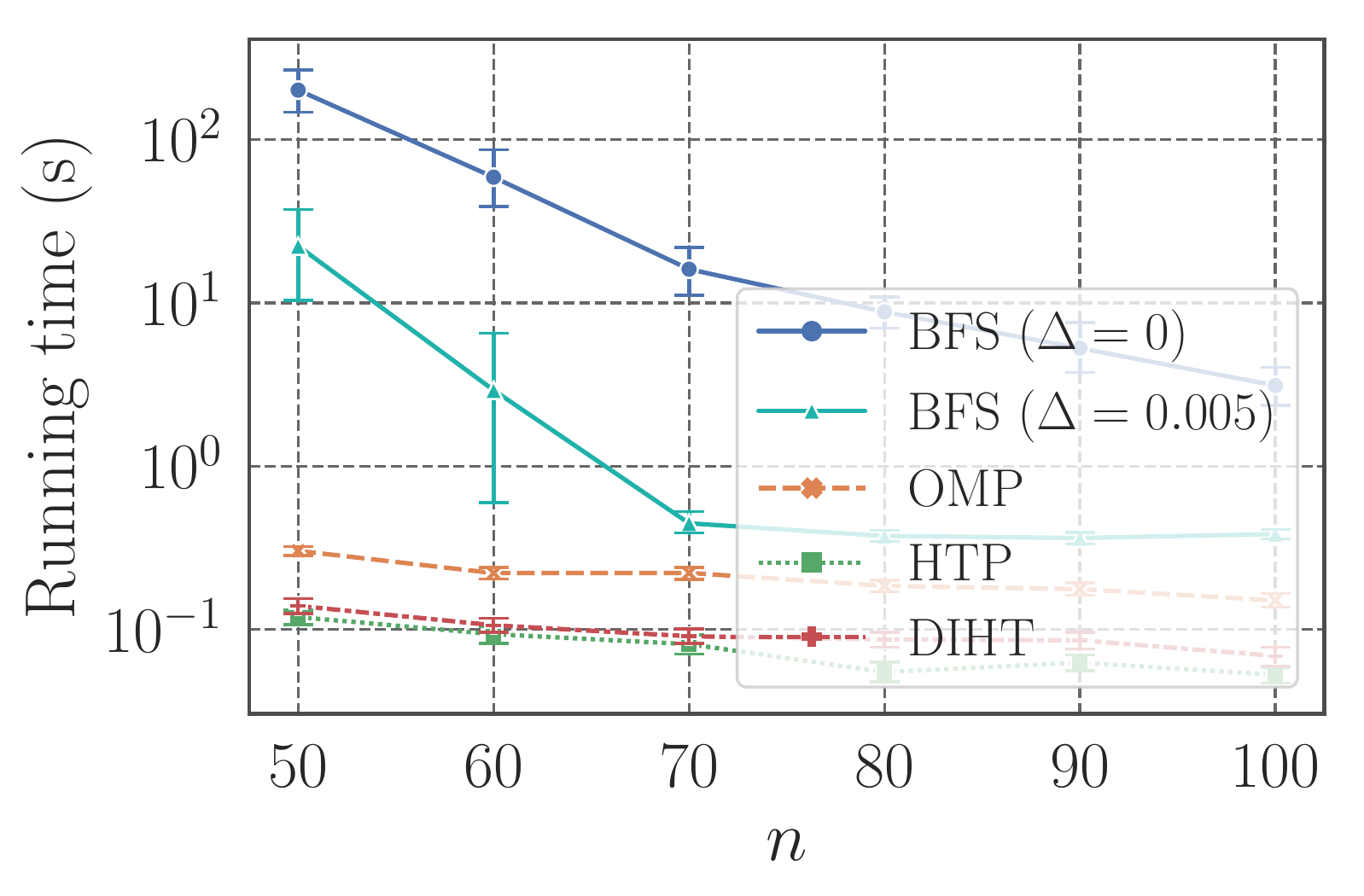}
		\subcaption{Huber, Running Time}
		\label{fig:huber_time_w}
	\end{minipage}
	\begin{minipage}[t]{0.235\textwidth}
		\includegraphics[width=1.0\linewidth]{./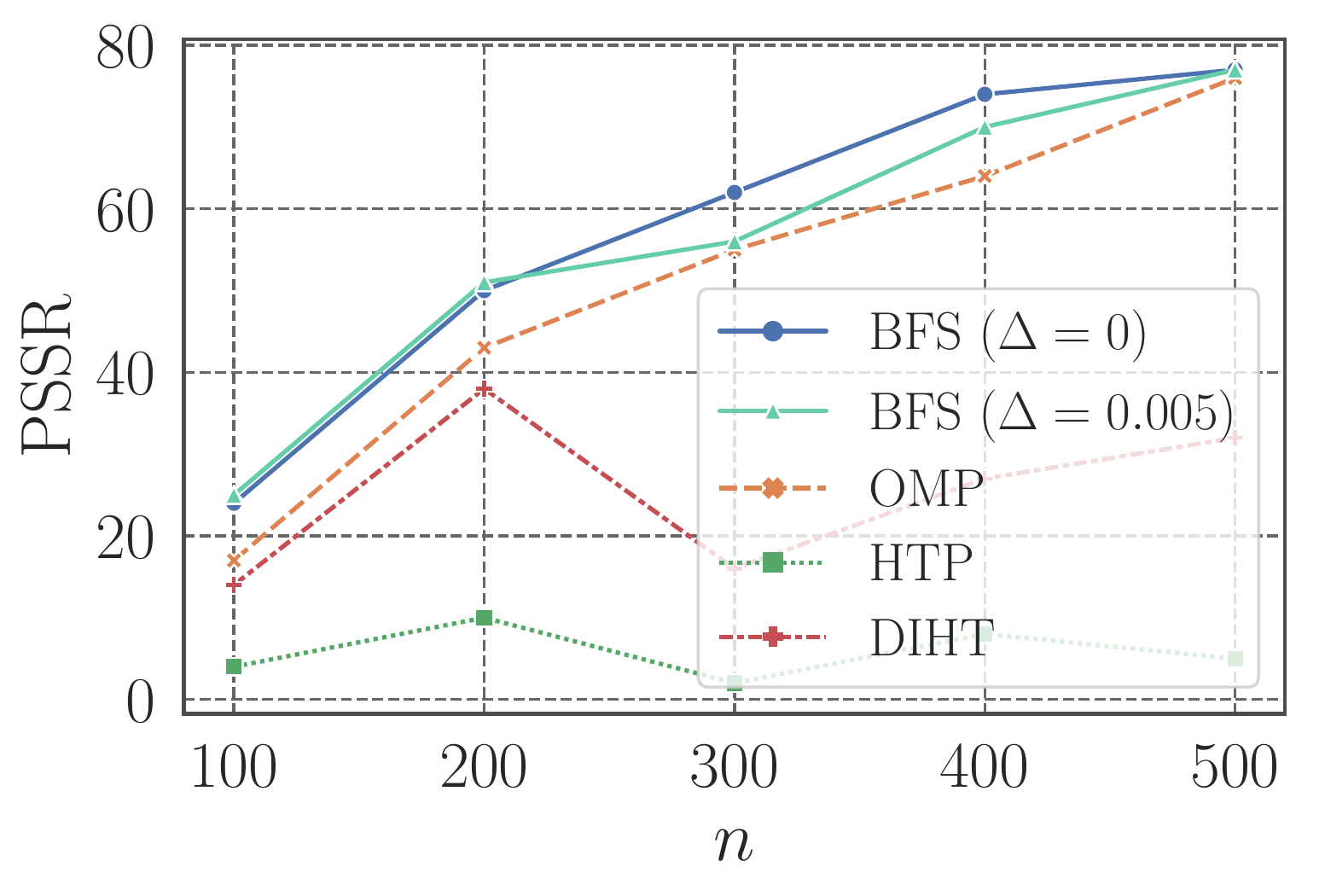}
		\subcaption{Logistic, PSSR}
		\label{fig:logistic_pssr_w}
	\end{minipage}
	\begin{minipage}[t]{0.235\textwidth}
		\includegraphics[width=1.0\linewidth]{./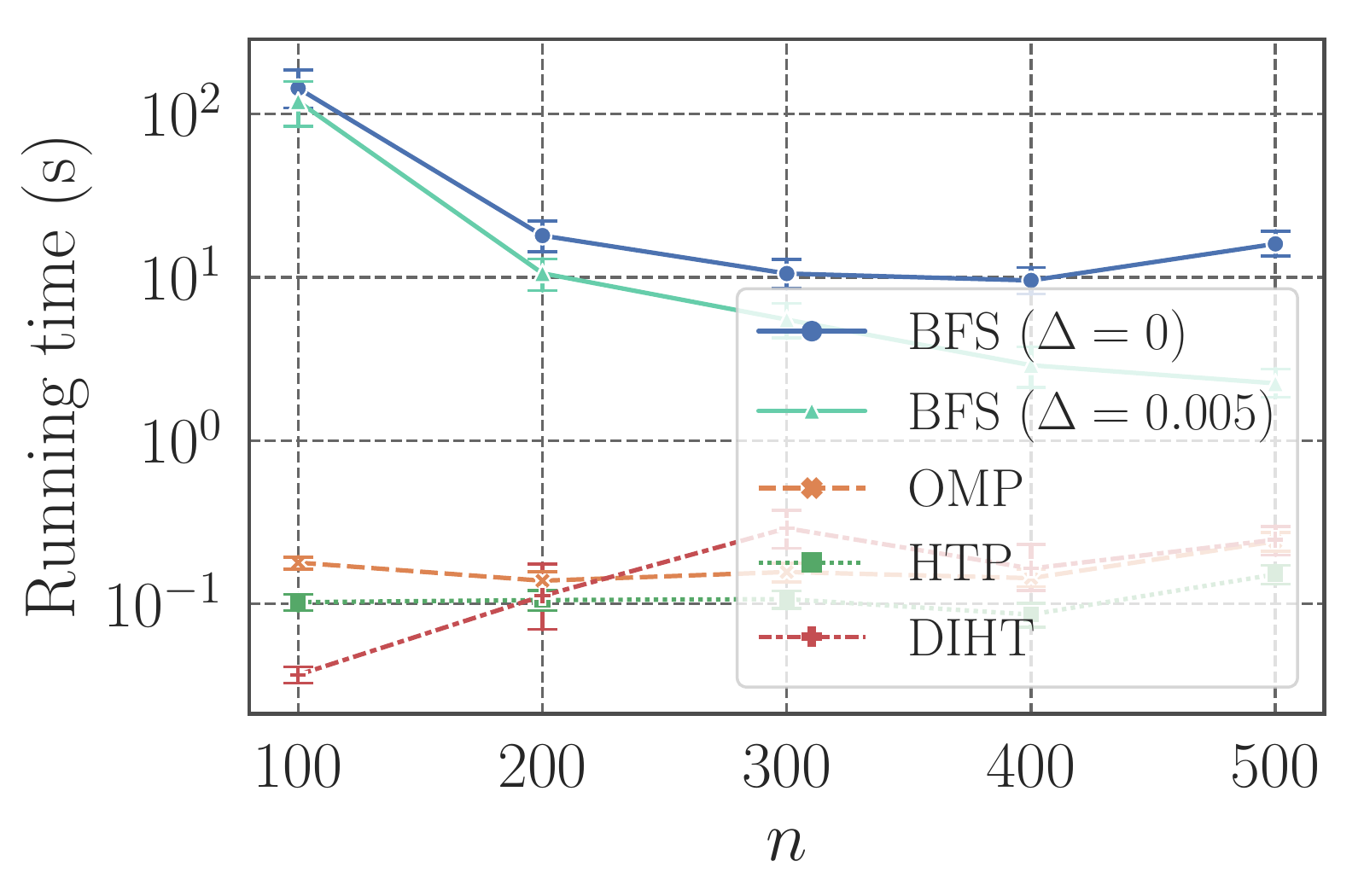}
		\subcaption{Logistic, Running Time}
		\label{fig:logistic_time_w}
	\end{minipage}
	\caption{
		PSSR and Running Times.
	} 
	\label{fig:pssr_plot}
\end{figure}

\begin{figure*}[tb]
	\centering
	\begin{minipage}[t]{0.32\textwidth}
		\includegraphics[width=1.0\textwidth]{./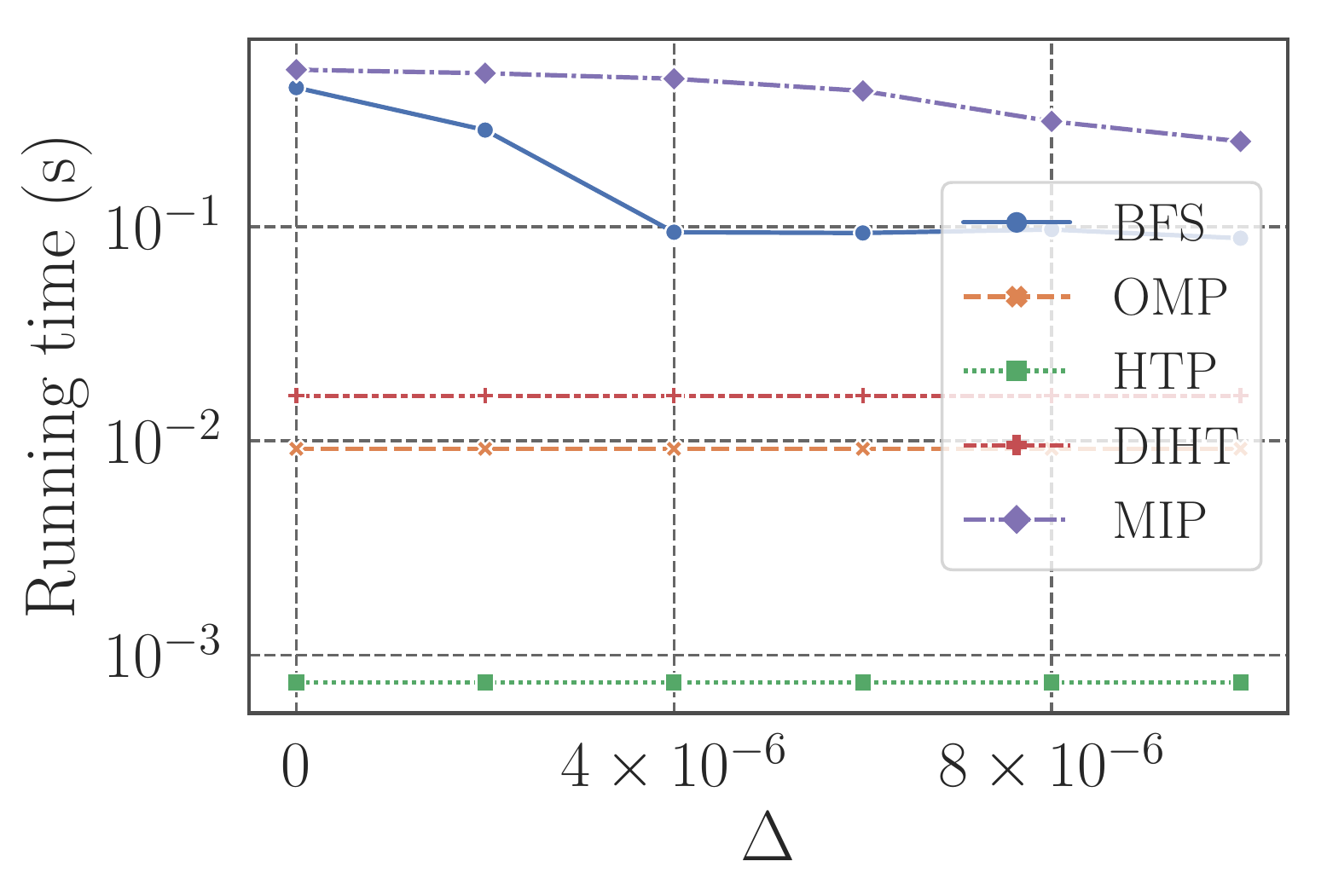}
		\subcaption{Diabetes, Running Time}
		\label{fig:dia_time}
	\end{minipage}
	\begin{minipage}[t]{0.32\textwidth}
		\includegraphics[width=1.0\textwidth]{./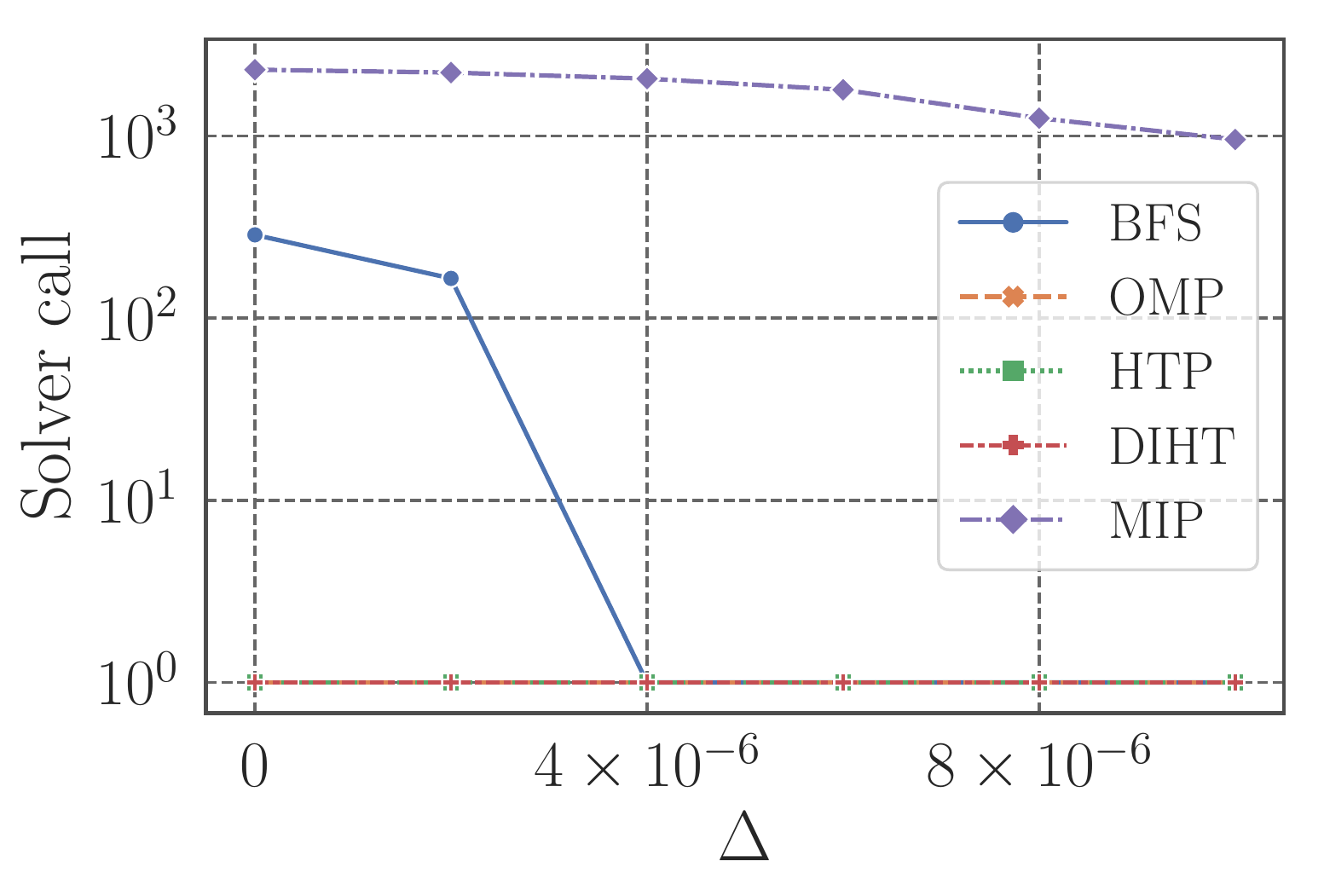}
		\subcaption{Diabetes, Solver Call}
		\label{fig:dia_call}
	\end{minipage}
	\begin{minipage}[t]{0.32\textwidth}
		\includegraphics[width=1.0\textwidth]{./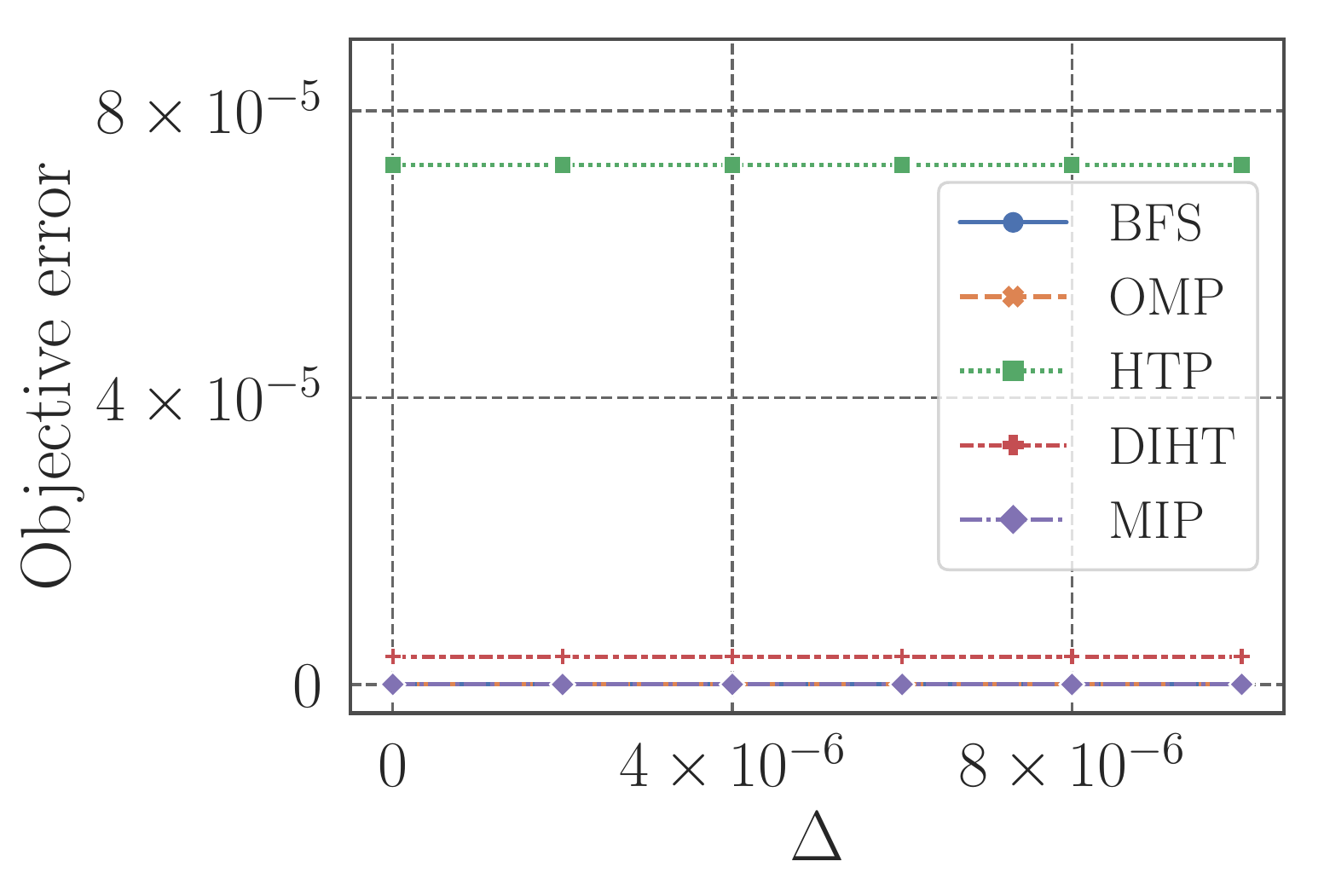}
		\subcaption{Diabetes, Objective Error}
		\label{fig:dia_err}
	\end{minipage}
	\begin{minipage}[t]{0.32\textwidth}
		\includegraphics[width=1.0\textwidth]{./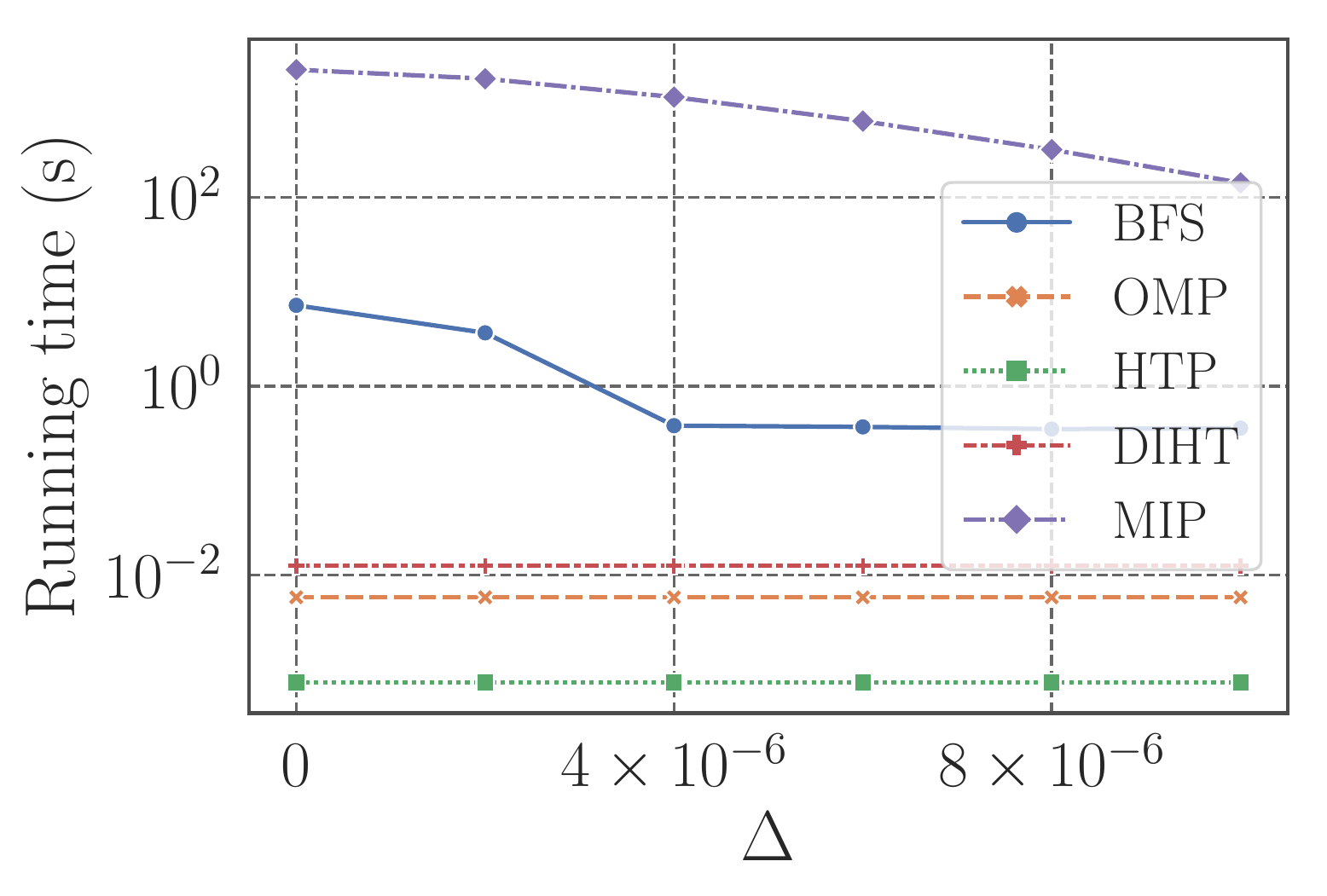}
		\subcaption{Boston, Running Time}
		\label{fig:bos_time}
	\end{minipage}
	\begin{minipage}[t]{0.32\textwidth}
		\includegraphics[width=1.0\textwidth]{./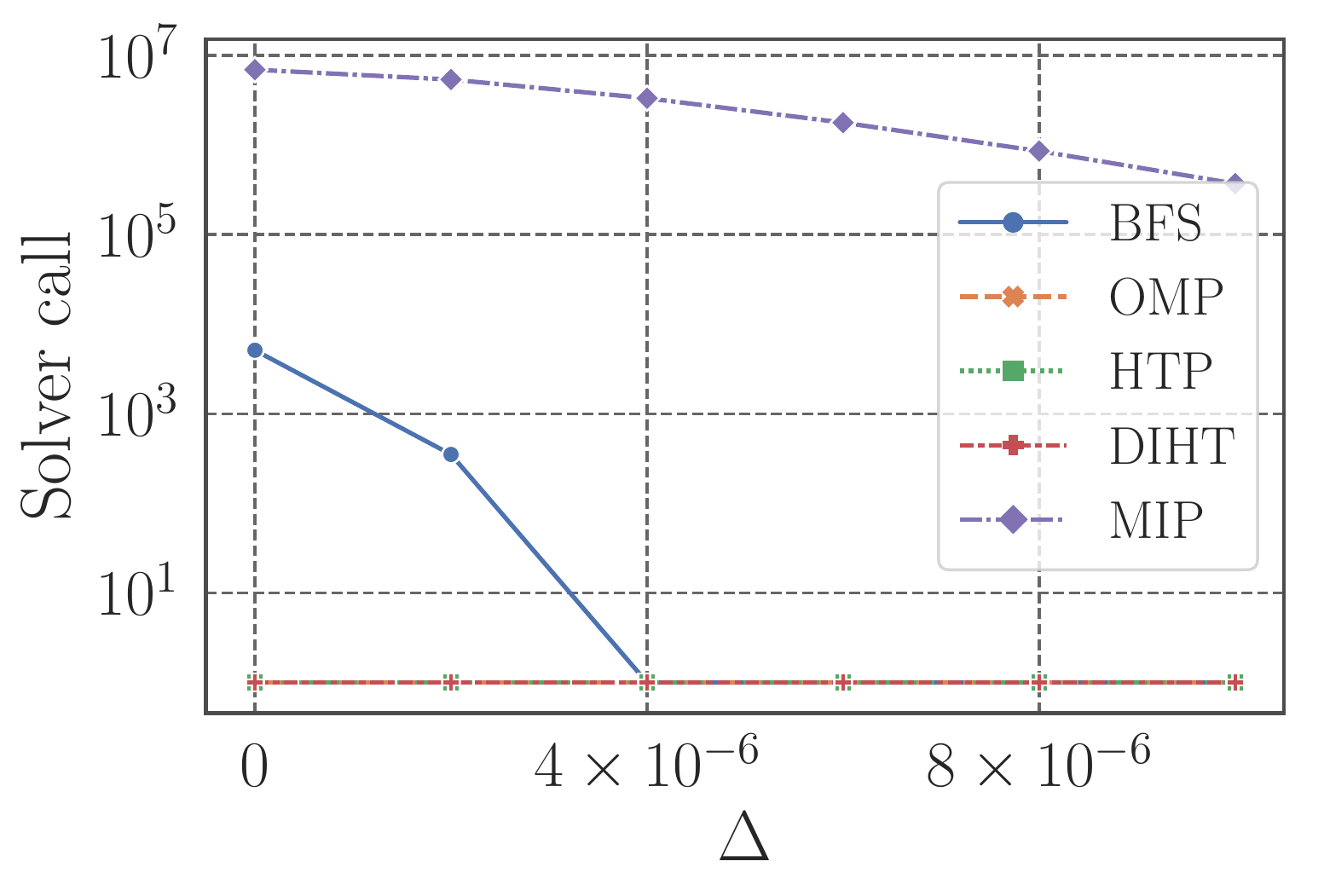}
		\subcaption{Boston, Solver Call}
		\label{fig:bos_call}
	\end{minipage}
	\begin{minipage}[t]{0.32\textwidth}
		\includegraphics[width=1.0\textwidth]{./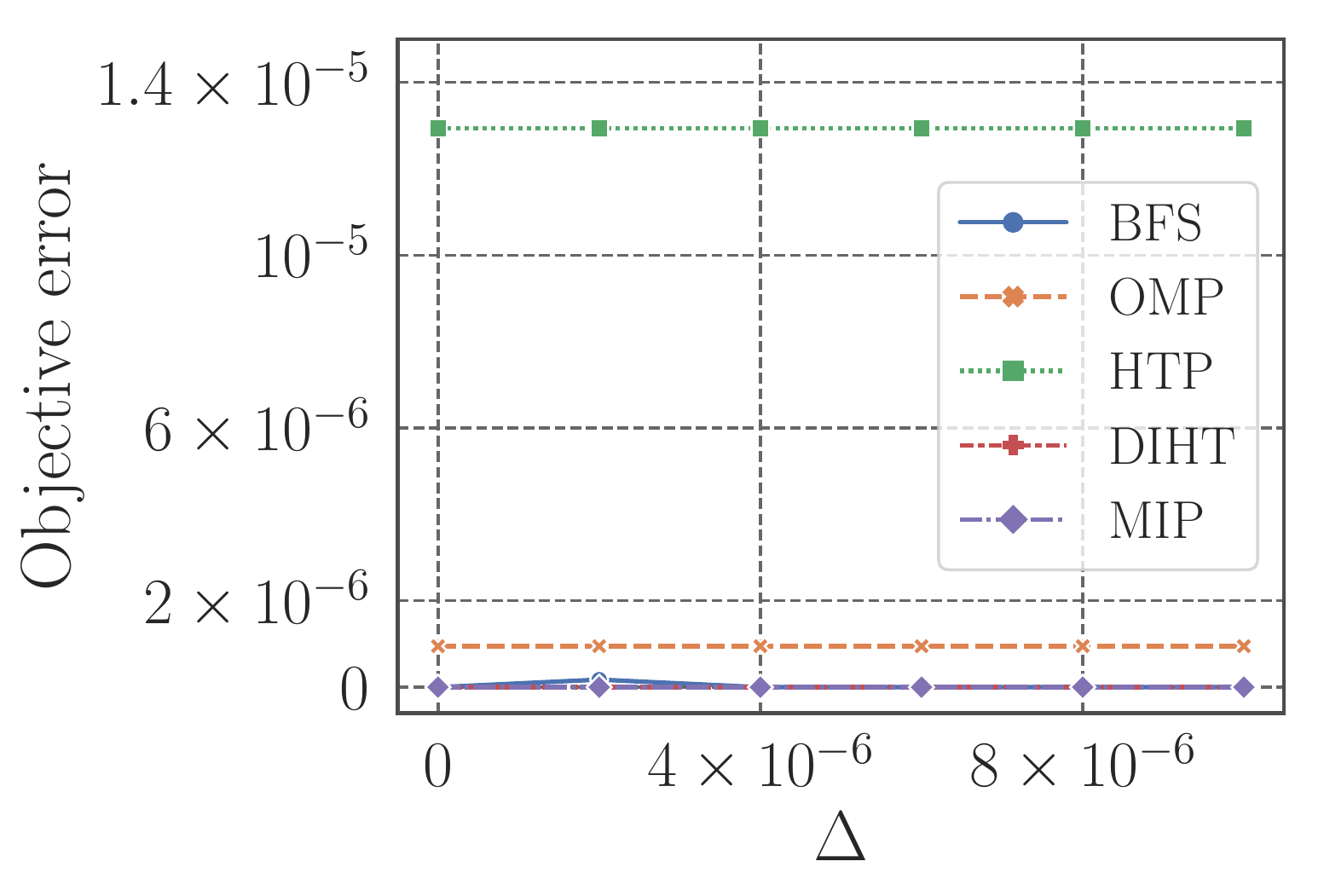}
		\subcaption{Boston, Objective Error}
		\label{fig:bos_err}
	\end{minipage}
	\caption{Running Times, Solver Calls, and Objective Errors. 
		Upper and lower figures show those of 
		the Diabetes and Boston instances, respectively. 
		We provide the results of OMP, HTP, and DIHT for comparison.}
	\label{fig:real}
\end{figure*}

\subsubsection{Support Recovery}\label{subsubsec:pssr}
We used Huber and logistic instances with 
$(d,k)=(50,5)$. 
We randomly generated $100$ Huber instances 
with $n=50,60,\dots,100$ 
and $100$ logistic instances with $n=100,200,\dots,500$.  
We applied the algorithms used in the above section and 
BFS that accepts $\Delta=0.005$ objective errors 
to the instances. 
We evaluated them with the percentage of successful support recovery 
(PSSR) as in \citep{liu2017dual}, 
which counts the number of instances such that output solution $x$ 
satisfies $\supp(x)=\supp(x^*)$ among the $100$ instances. 
We also measured running times of the algorithms. 
\Cref{fig:pssr_plot} summarizes the results. 
BFS achieved higher PSSR performances than the inexact methods 
with both Huber and logistic instances. 
In particular, 
the performance gaps in Huber instances with small $n$ are significant 
as in \Cref{fig:huber_pssr_w}. 
Namely, to exactly solve NSM can be beneficial in terms of support recovery 
when only small samples are available. 
On the other hand, 
BFS tends to get faster as $n$ increases. 
This is because 
the condition of NSM instances typically becomes better as $n$ increases, 
which often makes the gap between $F(S)$ and $\low{S}$ smaller; 
i.e., 
$\low{S}$ can accurately estimate $F(S)$, 
and so BFS terminates quickly. 
To conclude, 
given moderate-size NSM instances whose true supports can hardly be recovered with inexact methods, 
BFS can be useful for recovering 
them at the expense of computation time. 
In \Cref{app:additional}, 
we see that, given NSM instances with stronger regularization, 
BFS becomes faster while the PSSR performance deteriorates.

\subsection{Real-world Instances}\label{subsec:real}
We compare BFS and MIP by using NSM instances with quadratic objectives, 
$P(x) = \Lquadratic(Ax) + \frac{1}{2}\|\lambda\|^2$.  
The vector $b\in\R^n$ and matrix $A\in\R^{n\times d}$ 
were obtained from 
two scikit-learn datasets:  
Diabetes and Boston house-price, 
which for simplicity we call Boston. 
We normalized $b$ and each column of $A$ 
so that their $\ell_2$-norm became $1$.  
As in \citep{efron2004least,bertsimas2016best}, 
we considered 
the interaction and square effects of 
the original columns of $A$; 
any square effect 
whose original column had identical non-zeros 
was removed since they are redundant. 
Consequently, we obtained $A\in\R^{n\times d}$ 
with $(n,d)=(442,65)$ and $(506,103)$ 
for the Diabetes and Boston datasets, respectively. 
We let $\lambda=0.001$ and $k=10$. 
We applied BFS and MIP that 
accept various values of objective errors: 
$\Delta=0, 2.0\times10^{-6}, 4.0\times10^{-6}, \dots, 1.0\times10^{-5}$. 
The $\Delta$ value of MIP was controlled 
with the Gurobi parameter, {\tt MIPGapAbs}. 
For comparison, we also applied the inexact methods to the instances, 
whose behavior is independent of $\Delta$ values.   
We evaluated the methods 
in terms of running times, solver calls, 
and objective errors, 
which 
are defined by $P(x)-P(x^*)$ with output solution $x$. 
The solver call of MIP is the number of nodes explored by Gurobi.

\Cref{fig:real} summarizes the results. 
Both BFS and MIP become faster as $\Delta$ increases, 
and BFS is faster than MIP with every $\Delta$ value; 
in Boston instance with $\Delta=0$, 
BFS is more than $300$ times faster than MIP.  
Comparing the results with the two datasets ($d=65$ and $103$), 
we see that BFS is more scalable to large instances than MIP. 
For $\Delta\ge4\times10^{-6}$, 
BFS invoked \solver{} only once. 
Namely, BFS detected that the solutions obtained by a single invocation of \solver{} 
were guaranteed to have at most $\Delta$ errors without examining the descendant nodes. 
In contrast, MIP examined more nodes to obtain the $\Delta$-error guarantees, 
resulting in longer running times than those of BFS. 
This result is consistent with what we mentioned in \Cref{subsec:related}. 
We see that BFS found optimal solutions except for the case 
of Boston instance with $\Delta=2\times10^{-6}$, 
while none of the inexact methods succeeded in exactly solving both instances. 
We remark that the objective error of 
BFS does not always increase with $\Delta$ 
since the priority value, $\low{S}$, is not completely correlated with 
$F(S)$; 
i.e., a better solution can be obtained earlier.

\section{CONCLUSION}\label{sec:conclusion}
We proposed a BFS algorithm for NSM with $\ell_2$-regularized convex objective functions. 	
%
Experiments confirmed that 
our BFS  existing exact methods are inapplicable, 
and that BFS can run faster than MIP 
with the latest commercial solver, Gurobi 8.1.0. 

{
	\bibliographystyle{named}
	\bibliography{mybib}

\begin{thebibliography}{}

\bibitem[\protect\citeauthoryear{Arai \bgroup \em et al.\egroup
  }{2015}]{arai2015optimal}
H.~Arai, C.~Maung, and H.~Schweitzer.
\newblock Optimal column subset selection by {A}-star search.
\newblock In {\em Proceedings of the 29th AAAI Conference on Artificial
  Intelligence}. AAAI Press, 2015.

\bibitem[\protect\citeauthoryear{Avis and Fukuda}{1996}]{avis1996reverse}
D.~Avis and K.~Fukuda.
\newblock Reverse search for enumeration.
\newblock {\em Discrete Appl. Math.}, 65(1):21 -- 46, 1996.

\bibitem[\protect\citeauthoryear{Bertsimas and
  Van~Parys}{2017}]{bertsimas2017sparse}
D.~Bertsimas and B.~Van~Parys.
\newblock Sparse high-dimensional regression: Exact scalable algorithms and
  phase transitions.
\newblock {\em arXiv preprint arXiv:1709.10029}, 2017.

\bibitem[\protect\citeauthoryear{Bertsimas \bgroup \em et al.\egroup
  }{2016}]{bertsimas2016best}
D.~Bertsimas, A.~King, and R.~Mazumder.
\newblock Best subset selection via a modern optimization lens.
\newblock {\em Ann. Statist.}, 44(2):813--852, 2016.

\bibitem[\protect\citeauthoryear{Bhatia \bgroup \em et al.\egroup
  }{2017}]{bhatia2017consistent}
K.~Bhatia, P.~Jain, P.~Kamalaruban, and P.~Kar.
\newblock Consistent robust regression.
\newblock In {\em Advances in Neural Information Processing Systems 30}, pages
  2110--2119. Curran Associates, Inc., 2017.

\bibitem[\protect\citeauthoryear{Blumensath and
  Davies}{2009}]{blumensath2009iterative}
T.~Blumensath and M.~E. Davies.
\newblock Iterative hard thresholding for compressed sensing.
\newblock {\em Appl. Comput. Harmon. Anal.}, 27(3):265--274, 2009.

\bibitem[\protect\citeauthoryear{Bourguignon \bgroup \em et al.\egroup
  }{2016}]{bourguignon2016exact}
S.~Bourguignon, J.~Ninin, H.~Carfantan, and M.~Mongeau.
\newblock Exact sparse approximation problems via mixed-integer programming:
  Formulations and computational performance.
\newblock {\em IEEE Trans. Signal Process.}, 64(6):1405--1419, 2016.

\bibitem[\protect\citeauthoryear{Chen \bgroup \em et al.\egroup
  }{2015}]{chen2015filtered}
W.~Chen, Y.~Chen, and K.~Weinberger.
\newblock Filtered search for submodular maximization with controllable
  approximation bounds.
\newblock In {\em Proceedings of the 18th International Conference on
  Artificial Intelligence and Statistics}, volume~38, pages 156--164. PMLR,
  2015.

\bibitem[\protect\citeauthoryear{Dechter and
  Pearl}{1985}]{dechter1985generalized}
R.~Dechter and J.~Pearl.
\newblock Generalized best-first search strategies and the optimality of {A}*.
\newblock {\em J. ACM}, 32(3):505--536, 1985.

\bibitem[\protect\citeauthoryear{Defazio}{2016}]{defazio2016simple}
A.~Defazio.
\newblock A simple practical accelerated method for finite sums.
\newblock In {\em Advances in Neural Information Processing Systems 29}, pages
  676--684. Curran Associates, Inc., 2016.

\bibitem[\protect\citeauthoryear{Ebendt and
  Drechsler}{2009}]{ebendt2009weighted}
R.~Ebendt and R.~Drechsler.
\newblock Weighted {$\text{A}^*$} search --- unifying view and application.
\newblock {\em Artificial Intelligence}, 173(14):1310 -- 1342, 2009.

\bibitem[\protect\citeauthoryear{Efron \bgroup \em et al.\egroup
  }{2004}]{efron2004least}
B.~Efron, T.~Hastie, I.~Johnstone, and R.~Tibshirani.
\newblock Least angle regression.
\newblock {\em Ann. Statist.}, 32(2):407--499, 2004.

\bibitem[\protect\citeauthoryear{Elenberg \bgroup \em et al.\egroup
  }{2018}]{elenberg2018restricted}
E.~R. Elenberg, R.~Khanna, A.~G. Dimakis, and S.~Negahban.
\newblock Restricted strong convexity implies weak submodularity.
\newblock {\em Ann. Statist.}, 46(6B):3539--3568, 2018.

\bibitem[\protect\citeauthoryear{Foucart}{2011}]{foucart2011hard}
S.~Foucart.
\newblock Hard thresholding pursuit: An algorithm for compressive sensing.
\newblock {\em SIAM J. Optim.}, 49(6):2543--2563, 2011.

\bibitem[\protect\citeauthoryear{Hansen and Zhou}{2007}]{hansen2007anytime}
E.~A. Hansen and R.~Zhou.
\newblock Anytime heuristic search.
\newblock {\em J. Artif. Int. Res.}, 28(1):267--297, 2007.

\bibitem[\protect\citeauthoryear{Hart \bgroup \em et al.\egroup
  }{1968}]{hart1968formal}
P.~E. Hart, N.~J. Nilsson, and B.~Raphael.
\newblock A formal basis for the heuristic determination of minimum cost paths.
\newblock {\em IEEE Trans. Syst. Sci. Cybernet.}, 4(2):100--107, 1968.

\bibitem[\protect\citeauthoryear{Hocking and
  Leslie}{1967}]{hocking1967selection}
R.~R. Hocking and R.~N. Leslie.
\newblock Selection of the best subset in regression analysis.
\newblock {\em Technometrics}, 9(4):531--540, 1967.

\bibitem[\protect\citeauthoryear{Huang \bgroup \em et al.\egroup
  }{2018}]{huang2018constructive}
J.~Huang, Y.~Jiao, Y.~Liu, and X.~Lu.
\newblock A constructive approach to ${L}_0$ penalized regression.
\newblock {\em J. Mach. Learn. Res.}, 19(1):403--439, 2018.

\bibitem[\protect\citeauthoryear{Jain \bgroup \em et al.\egroup
  }{2014}]{jain2014iterative}
P.~Jain, A.~Tewari, and P.~Kar.
\newblock On iterative hard thresholding methods for high-dimensional
  {M}-estimation.
\newblock In {\em Advances in Neural Information Processing Systems 27}, pages
  685--693. Curran Associates, Inc., 2014.

\bibitem[\protect\citeauthoryear{Karahanoglu and
  Erdogan}{2012}]{karahanoglu2012astar}
N.~B. Karahanoglu and H.~Erdogan.
\newblock A* orthogonal matching pursuit: Best-first search for compressed
  sensing signal recovery.
\newblock {\em Digit. Signal Process.}, 22(4):555 -- 568, 2012.

\bibitem[\protect\citeauthoryear{Liu \bgroup \em et al.\egroup
  }{2017}]{liu2017dual}
B.~Liu, X.-T. Yuan, L.~Wang, Q.~Liu, and D.~N. Metaxas.
\newblock Dual iterative hard thresholding: From non-convex sparse minimization
  to non-smooth concave maximization.
\newblock In {\em Proceedings of the 34th International Conference on Machine
  Learning}, volume~70, pages 2179--2187. PMLR, 2017.

\bibitem[\protect\citeauthoryear{Malitsky and Pock}{2018}]{malitsky2018first}
Y.~Malitsky and T.~Pock.
\newblock A first-order primal-dual algorithm with linesearch.
\newblock {\em SIAM J. Optim.}, 28(1):411--432, 2018.

\bibitem[\protect\citeauthoryear{Miyashiro and
  Takano}{2015}]{miyashiro2015mixed}
R.~Miyashiro and Y.~Takano.
\newblock Mixed integer second-order cone programming formulations for variable
  selection in linear regression.
\newblock {\em European J. Oper. Res.}, 247(3):721 -- 731, 2015.

\bibitem[\protect\citeauthoryear{Natarajan}{1995}]{natarajan1995sparse}
B.~K. Natarajan.
\newblock Sparse approximate solutions to linear systems.
\newblock {\em SIAM J. Optim.}, 24(2):227--234, 1995.

\bibitem[\protect\citeauthoryear{Pati \bgroup \em et al.\egroup
  }{1993}]{pati1993orthogonal}
Y.~C. Pati, R.~Rezaiifar, and P.~S. Krishnaprasad.
\newblock Orthogonal matching pursuit: recursive function approximation with
  applications to wavelet decomposition.
\newblock In {\em Proceedings of the 27th Asilomar Conference on Signals,
  Systems and Computers}, pages 40--44 vol.1, 1993.

\bibitem[\protect\citeauthoryear{Pearl}{1984}]{pearl1984heuristics}
J.~Pearl.
\newblock {\em Heuristics: Intelligent Search Strategies for Computer Problem
  Solving}.
\newblock Addison-Wesley Longman Publishing Co., Inc., Boston, MA, USA, 1984.

\bibitem[\protect\citeauthoryear{Sakaue and
  Ishihata}{2018}]{sakaue2018accelerated}
S.~Sakaue and M.~Ishihata.
\newblock Accelerated best-first search with upper-bound computation for
  submodular function maximization.
\newblock In {\em Proceedings of the 32nd AAAI Conference on Artificial
  Intelligence}, 2018.

\bibitem[\protect\citeauthoryear{Sato \bgroup \em et al.\egroup
  }{2016}]{sato2016feature}
T.~Sato, Y.~Takano, R.~Miyashiro, and A.~Yoshise.
\newblock Feature subset selection for logistic regression via mixed integer
  optimization.
\newblock {\em Comput. Optim. Appl.}, 64(3):865--880, 2016.

\bibitem[\protect\citeauthoryear{Shalev-Shwartz and
  Zhang}{2016}]{shalev2016accelerated}
S.~Shalev-Shwartz and T.~Zhang.
\newblock Accelerated proximal stochastic dual coordinate ascent for
  regularized loss minimization.
\newblock {\em Math. Program.}, 155(1):105--145, 2016.

\bibitem[\protect\citeauthoryear{Valenzano \bgroup \em et al.\egroup
  }{2013}]{valenzano2013using}
R.~Valenzano, S.~J. Arfaee, J.~Thayer, R.~Stern, and N.~R. Sturtevant.
\newblock Using alternative suboptimality bounds in heuristic search.
\newblock In {\em Proceedings of the 23rd International Conference on Automated
  Planning and Scheduling}, pages 233--241. AAAI Press, 2013.

\bibitem[\protect\citeauthoryear{Yuan \bgroup \em et al.\egroup
  }{2014}]{yuan2014gradient}
X.~Yuan, P.~Li, and T.~Zhang.
\newblock Gradient hard thresholding pursuit for sparsity-constrained
  optimization.
\newblock In {\em Proceedings of the 31st International Conference on Machine
  Learning}, volume~32, pages 127--135. PMLR, 2014.

\bibitem[\protect\citeauthoryear{Yuan \bgroup \em et al.\egroup
  }{2016}]{yuan2016exact}
X.~Yuan, P.~Li, and T.~Zhang.
\newblock Exact recovery of hard thresholding pursuit.
\newblock In {\em Advances in Neural Information Processing Systems 29}, pages
  3558--3566. Curran Associates, Inc., 2016.

\end{thebibliography}
}

\clearpage
\appendix

\clearpage
\begin{center}
	\textbf{\LARGE Appendix}
\end{center}

\newcommand{\us}{{\bar u}}
\newcommand{\jp}{{j^\prime}}
\newcommand{\jhat}{{\hat j}}
\newcommand{\vtl}{{\tilde v}}
\newcommand{\js}{{j_{{\tt start}}}}
\newcommand{\je}{{j_{{\tt end}}}}
\newcommand{\jso}{{j^*_{{\tt start}}}}
\newcommand{\jeo}{{j^*_{{\tt end}}}}
\newcommand{\gmin}{{g_{\min}}}
\newcommand{\ol}{{\tt Endpoints}}
\newcommand{\el}{{\tt Examined}}

\section{Computing Proximal Operator in \Cref{alg:pdal}}
\label{app:prox}

We first introduce Moreau's identity: 
\[
x = \alpha \prox_{\alpha^{-1} f}(\alpha^{-1} x) + \prox_{\alpha f^*}(x)
\] 
for any $\alpha>0$, $x\in\R^m$, and (proper closed) 
convex function $f:\R^m\to\R$.  
With this equality, 
the computation of $\prox_{\tau_{t-1}L^*}(\beta^{t-1} - \tau_{t-1}Ay^{t-1})$ in Step~\ref{step:pdal_beta_update} of \Cref{alg:pdal} 
can be reduced to the computation of the proximal operator for convex loss function $L(\cdot)$, 
which can be performed efficiently with various $L(\cdot)$. 

We next see how to compute 
$\prox_{\rho_t\tau_{t} (\frac{1}{2\lambda} \|\cdot\|^2_{k-s,2})^*}(\bar y^{t}_{S_>})$ 
in Step~\ref{step:prox_topk2}. 
Thanks to Moreau's identity, it can be written as  
\[
\bar y^{t}_{S_>} - \rho_t\tau_t \prox_{\frac{1}{2\lambda\rho_t\tau_{t}} \|\cdot\|^2_{k-s,2}} 
\left(\frac{1}{\rho_t\tau_t} \bar y^{t}_{S_>} \right).
\]
Therefore, if we can compute the proximal operator for the top-$(k-s)$ $\ell_2$-norm,
then we can perform Step~\ref{step:prox_topk2}.   
Below we show that it can be computed in $O(d)$ time.

\paragraph{Proximal-operator Computation for the Top-$k$ $\ell_2$-norm.}
Given any positive integer $k$, $d$ ($k\le d$), 
vector $v\in\R^d$, 
and parameter $\mu>0$, we show how to efficiently compute 
\begin{align}
\prox_{\frac{\mu}{2}\topknorm{\cdot}{k}^2}(v)
&=
\argmin_{x\in\R^d}
\left\{
\frac{\mu}{2}\topknorm{x}{k}^2 + \frac{1}{2}\|x-v\|^2
\right\}
\\
&=
\argmin_{x\in\R^d}
\left\{
\mu \topknorm{x}{k}^2 + \|x-v\|^2
\right\}.  
\end{align}
Let $\sign(v)\in\{-1,1\}^d$ denote a vector  
whose $i$th entry is $1$ if $v_i\ge0$ and $-1$ otherwise. 
We have 
$\sign(v)\odot v = (|v_1|,\dots,|v_d|)^\top \eqqcolon |v|$, 
where $\odot$ is the Hadamard (element-wise) product. 
We define $\sigma_{|v|}:\dset\to\dset$ as a permutation 
that rearranges the entries 
of $|v|$ in a non-increasing order. 
We abuse the notation and 
take $\sigma_{|v|}(x)\in\R^d$ to be a permutated vector 
for any given $x\in\R^d$; 
note that $\sigma_{|v|}(|v|)$ is non-increasing.  
We also define $\sigma_{|v|}^{-1}$ as the inverse permutation 
that satisfies $x=\sigma_{|v|}^{-1}(\sigma_{|v|}(x))$ for any $x\in\R^d$. 
Let $u\coloneqq\sigma_{|v|}(|v|)$ ,
which is non-negative and non-increasing.  
Note that, once we obtain 
\begin{equation}
\xtl
\coloneqq
\argmin_{x\in\R^d}
\left\{
\mu \topknorm{x}{k}^2 + \|x-u\|^2
\right\}, \label{prob:prox} 
\end{equation} 
then the desired solution, 
$\prox_{\frac{\mu}{2}\topknorm{\cdot}{k}^2}(v)$, 
can be computed as
$\sign(v)\odot\sigma_{|v|}^{-1}(\xtl)$. 
Therefore, below we discuss how to compute $\xtl$. 

We define $f(x)\coloneqq \mu\topknorm{x}{k}+\|x-u\|^2$. 
Since $u_1\ge\dots\ge u_d\ge0$, we have $\xtl_1\ge\dots\ge\xtl_d\ge0$; 
otherwise $\xtl$ is not a minimizer. 
We let $\js\in[k]$ and $\je\in\{k,\dots,d\}$ be the smallest and largest indices 
such that 
$\xtl_\js=\xtl_\je=\xtl_k$; 
i.e., it holds that  
\[
\xtl_1\ge\dots>\xtl_\js=\dots=\xtl_k=\dots=\xtl_\je>\dots\ge\xtl_d.	
\] 
We define
$\us_i\coloneqq \frac{u_i}{1+\mu}$ for $i\in[k]$. 
If $\js=\je=k$, 
we can readily obtain  
\begin{align}
\xtl_i =
\begin{cases}
\us_i  & \text{for $i=1,\dots,k$},
\\
u_i  & \text{for $i=k+1,\dots,d$}. 
\end{cases}
\end{align}
Note that this case occurs iff $\us_k\ge u_{k+1}$; 
in this case, $\xtl$ can be obtained as above. 
We then consider the case $\js<\je$. 
Since $f$ is convex  
and $\xtl$ is a minimizer, 
we have $0\in\partial f(\xtl)$,  
which implies 
\begin{align}
\xtl_i =
\begin{cases}
\us_i  & \text{for $i=1,\dots,\js-1$},
\\
u_i  & \text{for $i=\je+1,\dots,d$}.
\end{cases}
\end{align}
Namely, 
$\xtl_1,\dots,\xtl_{\js-1}$ 
and 
$\xtl_{\je+1},\dots,\xtl_d$ 
can readily be obtained. 
Below we discuss how to compute $\js$, $\je$, 
and $\xi\coloneqq \xtl_\js=\dots=\xtl_\je$.  
Since $\xtl$ is a minimizer of $f$, 
our aim is to  
find an optimal triplet, 
$(\js,\je,\xi)\in[k]\times\{k,\dots,d\}\times\R$, 
that 
satisfies 
\begin{align}
\xi\in[u_{\je+1},\us_{\js-1}]
\label{eq:zeta_constraint}
\end{align}
and 
minimizes 
\begin{align}\label{prob:kyfan} 
g(\js, \je, \xi) 
\coloneqq{}
(1+\mu)\sum_{i=\js}^k (\xi - \us_i)^2
+
\sum_{i=k+1}^{\je} (\xi - u_i)^2,  
\end{align} 
where we regard $\us_0=+\infty$ and $u_{d+1}=0$, 
and we take the second term on the RHS to be $0$ if $\je=k$.  
Note that, 
once $(\js, \je)$ is fixed, 
computing optimal $\xi$ reduces to 
a one-dimensional quadratic minimization problem 
with constraint~\eqref{eq:zeta_constraint}, 
whose solution $\xi$ 
can be written as follows: 
\begin{align}
\xi 
=
\min
\{
\us_{\js-1},
\max
\{
u_{\je+1}, 
\tilde{\xi}
\}
\}, 
\end{align}
where
\[
\tilde{\xi} 
=
\frac{\sum_{i=\js}^{\je} u_i}{\mu(k-\js+1)+\je-\js+1}.
\]
In what follows, we discuss how to find $(\js,\je)$ that constitutes 
an optimal triplet.

We first show that triplet $(\js,\je,\xi)$ 
that satisfies~\eqref{eq:zeta_constraint} is sub-optimal 
if $\xi< \us_\js$ holds.   
In this case, we have 
\[
g(\js+1,\je,\xi) \le g(\js,\je,\xi)
\]  
and 
triple $(\js+1,\je,\xi)$ satisfies constraint~\eqref{eq:zeta_constraint}, 
i.e., 
\[
\xi\in[u_{\je+1}, \us_{\js}],
\] 
since $\xi<\us_{\js}$. 
Namely, $(\js+1,\je,\xi)$ 
is feasible and achieves at least as small $g$ value as $(\js,\je,\xi)$. 
Below
we focus on the case where $\us_\js\le\xi$ holds. 

\begin{algorithm}[tb]
	{\fontsize{9pt}{10pt}\selectfont	
		\caption{Computation of $\prox_{\frac{\mu}{2}\topknorm{\cdot}{k}^2}(v)$}\label{alg:kyfan}
		\begin{algorithmic}[1]
			\State $u\gets \sigma_{|v|}(|v|)$ 
			\State $\us_i\gets u_i/(1+\mu)$ for $i\in[k]$ 
			\If{$\us_k \ge u_{k+1}$}
			\State
			$
			\xtl\gets(\us_1,\dots,\us_k, u_{k+1},\dots, u_d)^\top
			$
			\State {\bf return} 
			$
			\sign(v)\odot\sigma_{|v|}^{-1}(\xtl)
			$
			\EndIf
			\State $\jhat\gets k$, 
			$g_{\min}\gets+\infty$, 
			and 
			$\el\gets\emptyset$  
			\For{$\js = 1,\dots, k$}
			\State $\ol\gets\{j\in[d]\relmid{} u_j > \us_\js \ \text{and}\ j\ge\jhat \}$
			\For{$\je\in\ol$}
			\State
			$\tilde{\xi}\gets
			\frac{\sum_{i=\js}^{\je} u_i}{\mu(k-\js+1)+\je-\js+1}$
			\State $\xi\gets\min
			\{
			\us_{\js-1},
			\max
			\{
			u_{\je+1}, 
			\tilde{\xi}
			\}
			\}$  
			\If{$g(\js,\je,\xi)< g_{\min}$}
			\State 
			$(\jso,\jeo,\xi^*)\gets(\js,\je,\xi)$
			\State
			$\gmin\gets g(\jso,\jeo,\xi^*)$
			\EndIf		 
			\EndFor
			\State $\el\gets\el\cup\ol$
			\If{$\el\neq\emptyset$}
			\State $\jhat\gets\max\el$  
			\EndIf
			\EndFor
			\State
			$\xtl\gets(\us_1,\dots,\us_{\jso-1},\xi^*,\dots,\xi^*,u_{\jeo+1},\dots,u_d)^\top$
			\State {\bf return} 
			$\sign(v)\odot\sigma_{|v|}^{-1}(\xtl)$
		\end{algorithmic}
	}
\end{algorithm}

We then prove that 
triple $(\js, \je, \xi)$ 
satisfying \eqref{eq:zeta_constraint} and 
$\us_{\js}\ge u_{\je}$ is sub-optimal. 
In this case, 
we have 
\[
g(\js,\je-1,\xi)\le g(\js,\je,\xi)
\]  
and 
triple $(\js, \je-1, \xi)$ satisfies constraint~\eqref{eq:zeta_constraint}, 
i.e., 
\[
\xi\in[u_{\je}, \us_{\js-1}], 
\] 
since $u_\je \le \us_{\js} \le \xi$. 
Namely, $(\js,\je-1,\xi)$ 
is feasible and achieves at least as small $g$ value as $(\js,\je,\xi)$. 	
Therefore, to find an optimal triplet, 
we only need to examine $(\js,\je,\xi)$ 
that satisfies constraint~\eqref{eq:zeta_constraint} 
and 
\begin{align}
u_\je>\us_{\js}.
\label{ineq:uus}
\end{align}

We fix $\js\in[k]$ 
and define  
\[
\jhat\coloneqq \max\{j\in\{k,\dots,d\} \relmid{} u_j>\us_{\js-1} \}. 
\]  
Then, 
\begin{align}
\je \ge \jhat
\quad 
\text{(i.e., $u_\je \le u_\jhat$)}
\label{ineq:jej}
\end{align} 
must hold for the following reason: 
If $\je+1\le\jhat$ holds, we have 
$
u_{\je+1}\ge u_{\jhat} > \us_{\js-1}  
$, 
which means no $\xi$ satisfies constraint~\eqref{eq:zeta_constraint}. 

Taking \eqref{ineq:uus} and \eqref{ineq:jej} into account, 
once $\js\in[k]$ is fixed, 
endpoint $\je$ to be examined satisfies 
\begin{align}
u_\je > \us_{\js} 
\quad 
\text{and} 
\quad
\je
\ge
\jhat = \max\{j\in\{k,\dots,d\} \relmid{} u_j>\us_{\js-1} \}. 
\end{align}
Therefore, 
by 
examining $\js=1,\dots,k$ sequentially 
and 
maintaining 
$\el=\{j\in\{k,\dots,d\} \relmid{} u_j > \us_\js \}$ 
as in \Cref{alg:kyfan}, 
we can find an optimal triplet $(\jso,\jeo,\xi^*)$, 
with which we can obtain 
$\prox_{\frac{\mu}{2}\topknorm{\cdot}{k}^2}$.  

We examine the complexity of \Cref{alg:kyfan}. 
Let $\ol_i$ be the list of endpoints constructed in the $i$th iteration for $i\in[k]$. 
Since $\ol_{i}$ and $\ol_{i+1}$ have at most one common element 
and $\bigcup_{i\in[k]}\ol_i$ includes at most $d-k+1$ elements, 
we have $\sum_{i=1}^{k}|\ol_i|\le d-k+1 + (k-1)=d$. 
Namely, \Cref{alg:kyfan} examines at most $d$ candidate triplets, 
hence \Cref{alg:kyfan} runs in $O(d)$ time.

\begin{algorithm}[tb]
	{\fontsize{9pt}{10pt}\selectfont
		\caption{SGA for computing $\low{S}$ and $\sol{S}$}\label{alg:diht}
		\begin{algorithmic}[1]
			\State Initialize $\beta^{0}$ and $\eta_0$. 
			\State Fix $\epsilon>0$. 
			\Comment{$\epsilon=10^{-5}$ in the experiments}   
			\For{$t=1,2,\dots$}
			\State 
			Compute a supergradient $g^{t-1}\in\partial D(\beta^{t-1};S)$.
			\State $\eta_{t}\gets2\times\eta_{t-1}$
			\Loop 
			\Comment{Backtracking of step-size $\eta_t$}
			\State 
			$
			\beta^{t}\gets
			\PF(\beta^{t-1}+\eta_{t}g^{t-1}) 
			$ 
			\If{$D(\beta^{t};S)\ge D(\beta^{t-1};S)$}
			\State {\bf break}
			\EndIf
			\State $\eta_{t} \gets 0.5\times\eta_{t}$
			\EndLoop
			\If{$(D(\beta^{t};S)-D(\beta^{t-1};S))/\pmin\le\epsilon$}
			\State 
			\begin{tabular}{@{}l}
				$x^{t}_S \gets -\frac{1}{\lambda} A_S^\top \beta^{t}$, 
				\quad 
				$x^{t}_{S_\myle \setminus S} \gets 0$, 
				\quad 
				and 
				\\
				$x^{t}_{S_\myg} \gets -\frac{1}{\lambda} \mathcal T_{k-s} (A_{S_\myg}^\top \beta^{t})$
			\end{tabular} 
			\State
			$\low{S}\gets D(\beta^{t};S)$
			\State
			$\sol{S}\gets\argmin_{\supp(x)\subseteq \supp(x^{t})} \P(x)$
			\State {\bf return} $\low{S}$, $\sol{S}$ 
			\EndIf
			\EndFor
		\end{algorithmic}
	}
\end{algorithm}

\section{Comparison of Supergradient Ascent and Primal-dual Algorithm with Linesearch}
\label{app:sga_pdal}

As mentioned in~\Cref{subsection:diht}, 
we can use the supergradient ascent (SGA) for solving
\begin{align}
\maximize_{\beta \in \R^n} \quad D(\beta; S)
\end{align}
instead of PDAL. 
We first describe the details of SGA based on \citep{liu2017dual}, 
and then we experimentally compare two BFS algorithms 
that use SGA and PDAL as their subroutines.

\subsection{Details of SGA}

Let $\efdom\coloneqq\{\beta\in\R^n\relmid{} D(\beta;S)>-\infty\}$ be the effective domain of $D(\beta;S)$ 
and $\PF(\cdot)$ be the Euclidean projection operator onto 
$\efdom$. 
If $\partial L^*(\beta)\subseteq \R^n$ is the super-differential of $L^*(\beta)$, 
the super-differential of $D(\beta;S)$ is given by 
\[
\partial D(\beta;S)
=
\{A \tilde x(\beta;S) - \tilde g \relmid{} 
\tilde g \in\partial L^*(\beta)  \}.
\]
With these definitions, 
the SGA procedure for computing $\low{S}$ and $\sol{S}$ 
can be described as in Algorithm~\ref{alg:diht}. 
We can use the warm-start and pruning techniques as in \Cref{subsec:acceleration}, 
but the details of the warm-start technique for initializing $\beta^{0}$ and $\eta_{0}$ 
are slightly different from those of PDAL as explained below. 
When executing SGA  with $S=\emptyset$ at the beginning of BFS, 
we set $\beta^{0}\gets 0$ and $\eta_{0}\gets 1$. 
We now suppose that $\Sparent\in\V$ is popped from $\minheap$ 
and that we are about to compute $\low{S}$ and $\sol{S}$, where $(\Sparent,S)\in E$. 
Let $\beta_\Sparent$ be a dual solution obtained in the last iteration of SGA executed for $\Sparent$; 
i.e., $\low{\Sparent}=D(\beta_\Sparent;\Sparent)$.
When executing SGA to maximize $D(\beta;S)$, we let $\beta^{0}\gets\beta_\Sparent$. 
Furthermore, 
we set $\eta^{0}\gets\eta_\Sparent$, 
where 
$\eta_\Sparent$ is a step-size used to obtain $\beta^{1}$ in SGA executed for $\Sparent$.

\begin{figure}[tb]
	\centering
	\begin{minipage}{0.235\textwidth}
		\includegraphics[width=1.0\linewidth]{./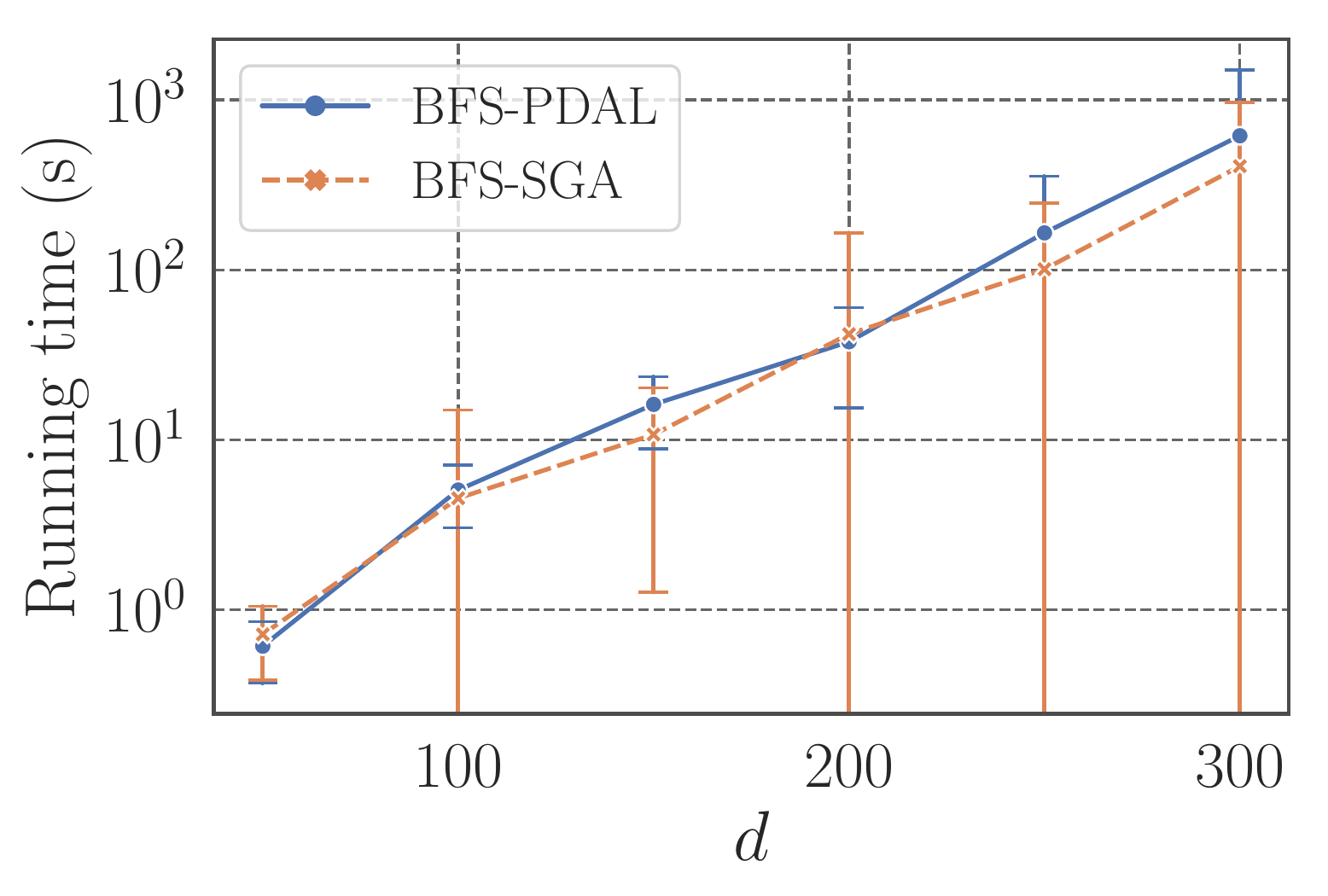}
		\subcaption{Running Time}
	\end{minipage}
	\begin{minipage}{0.235\textwidth}
		\includegraphics[width=1.0\linewidth]{./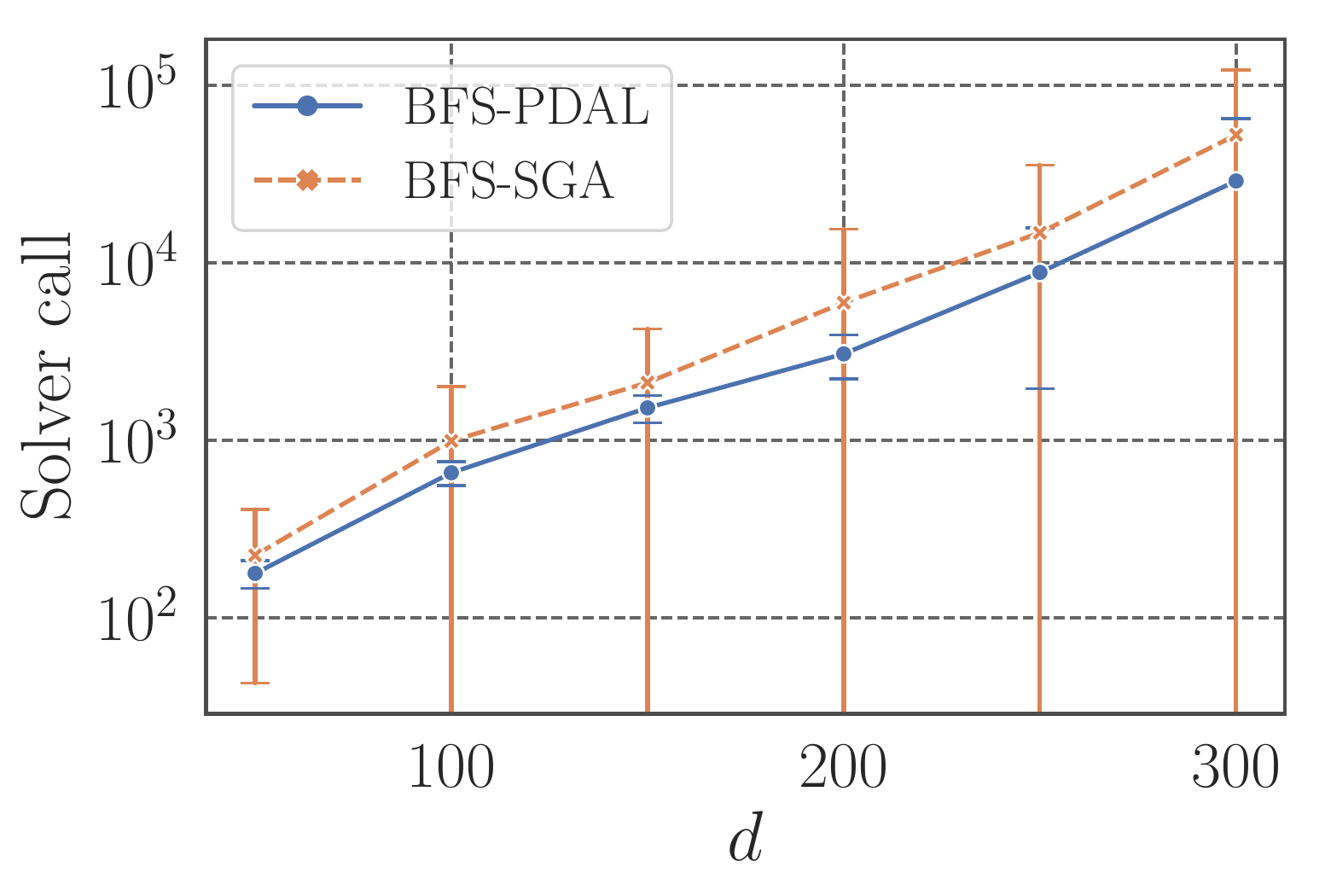}
		\subcaption{Solver Call}
	\end{minipage}
	\caption{Running Times and Solver Calls of BFS-SGA and BFS-PDAL.} 
	\label{fig:syn_sga_pdal_plot}
\end{figure}	

\subsection{Experimental Comparison}
We experimentally compare two BFS algorithms with SGA and PDAL, 
which we call BFS-SGA and BFS-PDAL, respectively. 	
The experimental setting used here is 
the same as the Huber instance in \Cref{subsubsec:cost}. 
We observed the running times and solver calls of 
BFS-SGA and BFS-PDAL. 
As with BFS-PDAL, 
BFS-SGA used warm-start and pruning techniques. 

\Cref{fig:syn_sga_pdal_plot} shows the results, 
where each curve and error bar indicate the mean and 
standard deviation calculated over $100$ random instances. 
While the running times of the two methods are almost the same, 
BFS-PDAL is more efficient in terms of solver calls. 
This result implies that $\low{S}$ values computed by PDAL and SGA are 
different even though $D(\beta;S)$ is concave; 
in fact, due to the non-smoothness of $D(\beta;S)$, 
SGA sometimes fails to maximize $D(\beta;S)$. 
As a result, 
BFS-SGA tends to require more solver calls than BFS-PDAL on average. 
Note that, 
while the computation costs of SGA and PDAL are polynomial 
in $d$, 
the number of solver calls can increase exponentially in $k$; 
i.e., 
it is more important to reduce the number of solver calls  
than to reduce the running time of subroutines (SGA and PDAL). 
To conclude, 
BFS-PDAL is expected to be more scalable to larger instances than BFS-SGA, 
which motivates us to employ PDAL.

\begin{figure}[t]
	\centering
	\begin{minipage}{0.235\textwidth}
		\includegraphics[width=1.0\linewidth]{./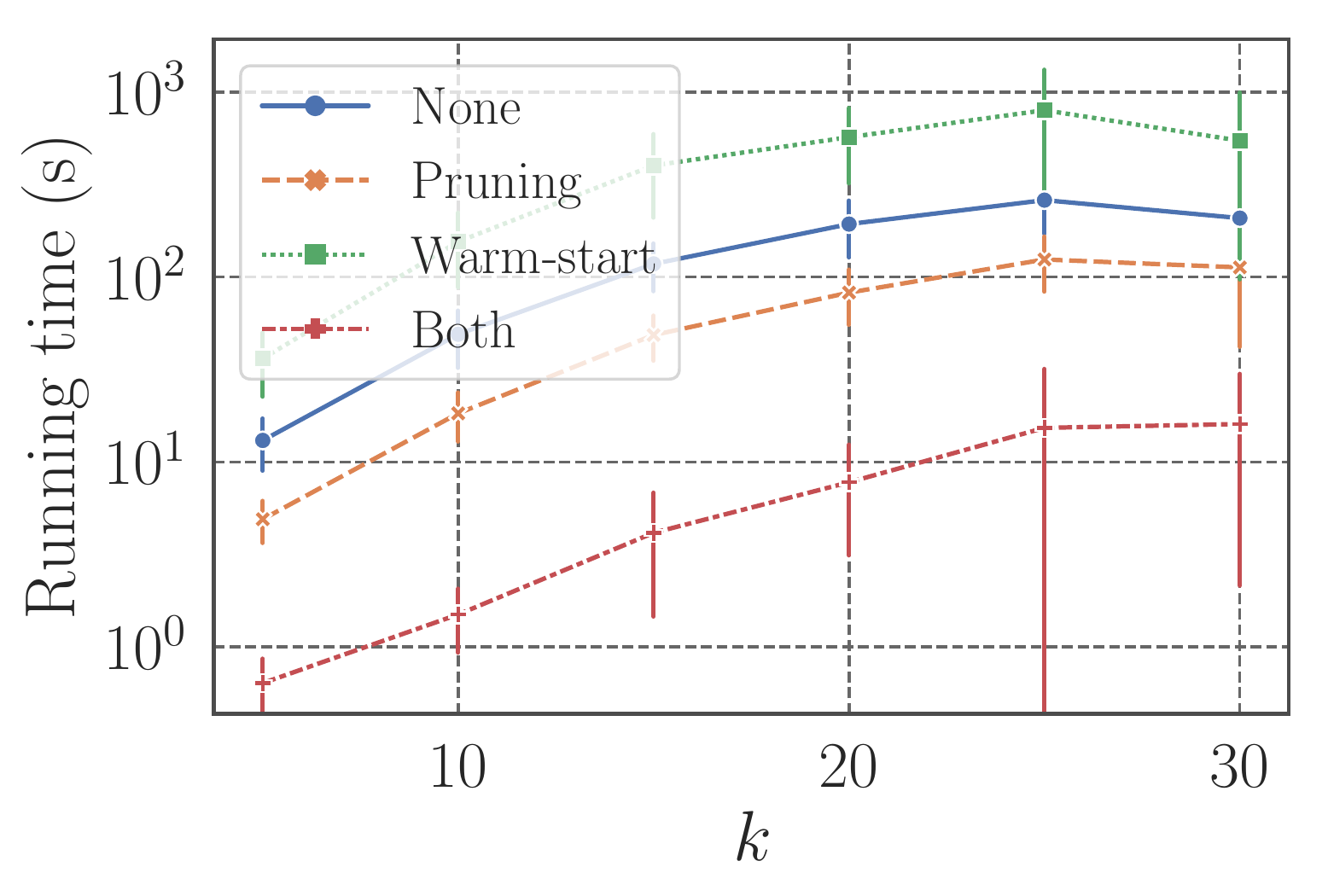}
		\subcaption{Running Time}
	\end{minipage}
	\begin{minipage}{0.235\textwidth}
		\includegraphics[width=1.0\linewidth]{./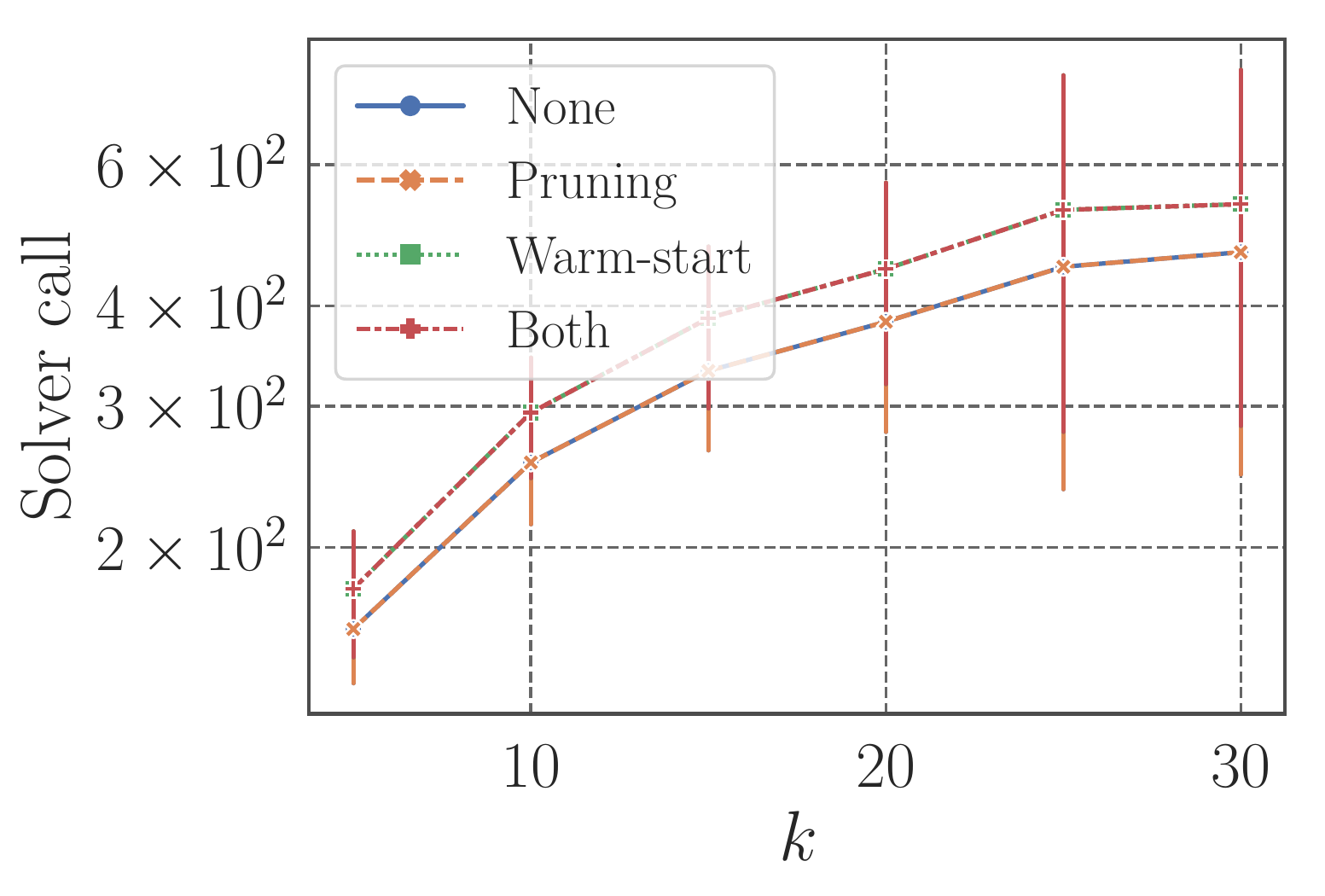}
		\subcaption{Solver Call}
	\end{minipage}
	\caption{
		Running Times and Solver Calls of 
		BFS with and without Pruning and Warm-start Techniques.  	
	} 
	\label{fig:ablation_plot}
\end{figure}

\section{Ablation Study}\label{app:ablation}
We experimentally study the degree to which the pruning and warm-start techniques 
speed up BFS. 
We used the Huber instances (\Cref{subsec:synthetic}) 
with $d=50$, $k=5,10,\dots,30$, and $n=\floor{10k\log d}$; 
for each $k$ value we generated $100$ random instances.  

\Cref{fig:ablation_plot} presents 
running times and solver calls. 
Each value 
and error bar are mean and standard deviation calculated over 
$100$ instances. 
We see that the warm-start technique alone does not always accelerate BFS, 
but the combination of pruning and warm-start 
greatly reduces the running time. 
This is because, 
if a good solution is available thanks to warm-start 
at the beginning of \solver{}, 
then it can be force-quitted quickly via the pruning procedure. 
We also see that, 
while the size of the state-space tree increases exponentially 
in $k$, 
the running time and solver call grow sub-linearly in 
$k$ in the semi-log plots, 
which implies that the search space is effectively reduced  
thanks to our prioritization method with \solver. 

\section{Additional Experimental Results}
\label{app:additional}

We examine how the PSSR performances and running times 
change with stronger regularization. 
The experimental settings used here are almost the same as those of 
\Cref{subsubsec:pssr}. 
The only difference is the $\lambda$ value: 
We let $\lambda=0.01$ and $0.002$ for Huber and logistic instances, 
respectively. 

\Cref{a_fig:pssr_plot} presents the PSSR values and running times. 
Relative to the results shown in \Cref{subsubsec:pssr}, 
BFS became faster 
and 
the PSSR performance gap between BFS and the inexact methods became smaller. 
The reason for this result is as follows: 
When strongly regularized, 
objective functions become convex more strongly. 
This typically reduces the gap, $F(S) - \low{S}$, and so BFS terminates more quickly. 
Furthermore, it becomes easier to solve NSM instances exactly with inexact methods. 
Namely, it tends to be easy to exactly solve NSM instances with strong regularization. 
On the other hand, due to the over regularization, 
optimal solutions to such NSM instances often fail to recover the true support, 
hence the PSSR performance of BFS deteriorates; 
consequently, the gap between BFS and inexact methods became smaller. 
To conclude, 
if we are to achieve high support recovery performance, 
we need to solve NSM instances with moderate regularization, 
which is often hard for inexact methods as implied by the experimental results in \Cref{subsubsec:pssr}. 
This observation emphasizes the utility of our BFS, 
which is empirically efficient enough for exactly solving moderate-size NSM instances.

\begin{figure}[tb]
	\centering
	\begin{minipage}[t]{0.235\textwidth}
		\includegraphics[width=1.0\linewidth]{./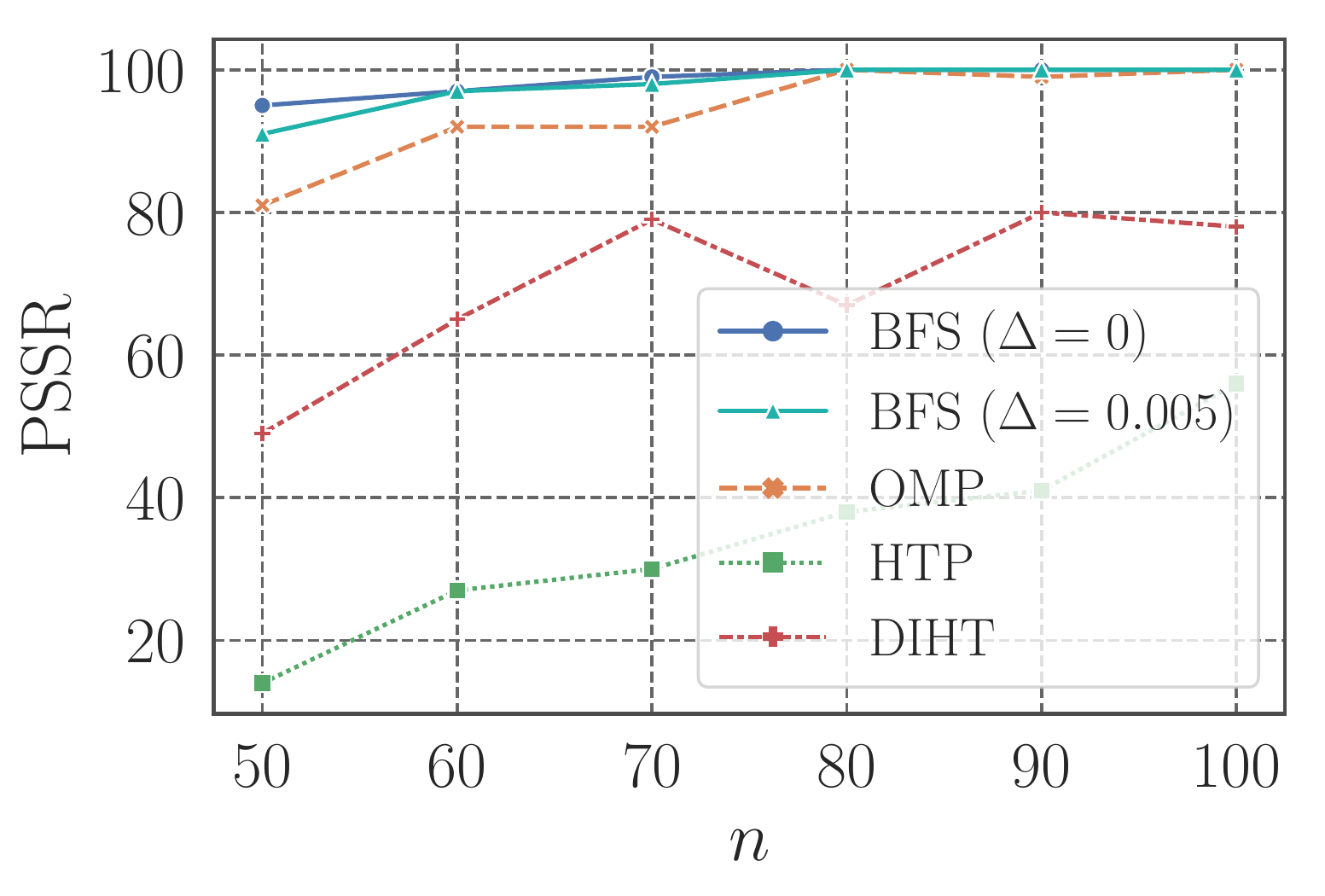}
		\subcaption{Huber, PSSR}
		\label{fig:huber_pssr_s}
	\end{minipage}
	\begin{minipage}[t]{0.235\textwidth}
		\includegraphics[width=1.0\linewidth]{./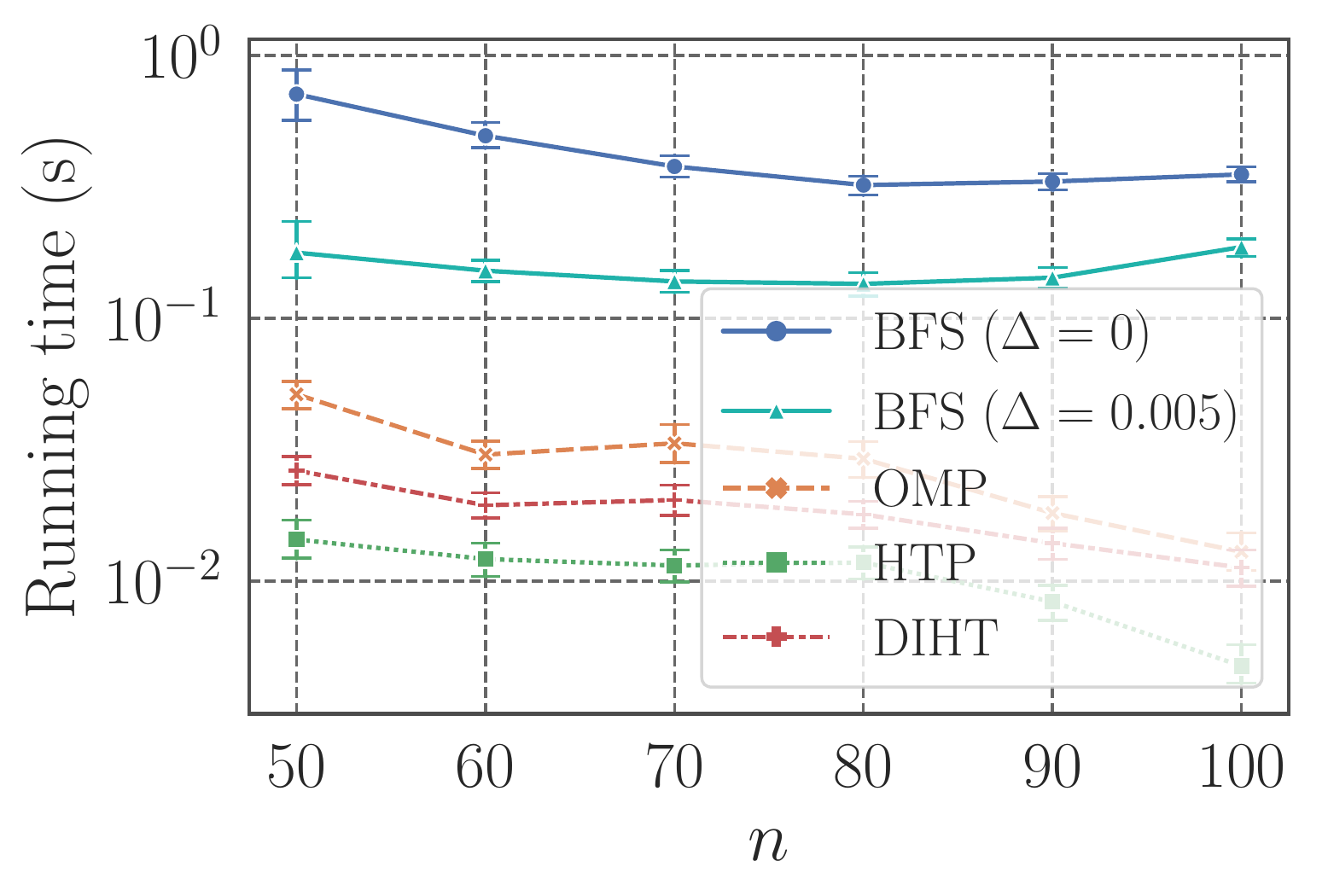}
		\subcaption{Huber, Running Time}
		\label{fig:huber_time_s}
	\end{minipage}
	\begin{minipage}[t]{0.235\textwidth}
		\includegraphics[width=1.0\linewidth]{./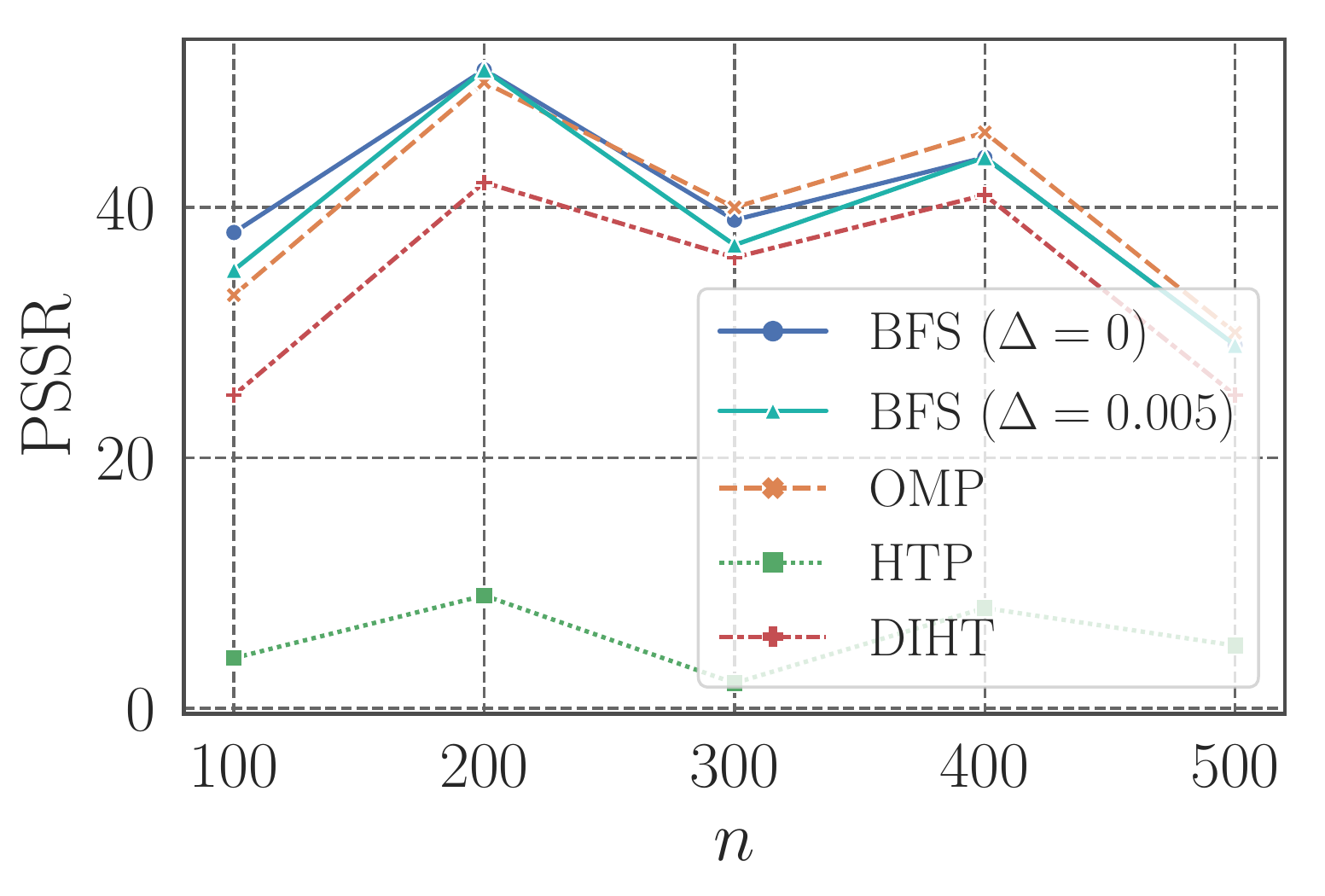}
		\subcaption{Logistic, PSSR}
		\label{fig:logistic_pssr_s}
	\end{minipage}
	\begin{minipage}[t]{0.235\textwidth}
		\includegraphics[width=1.0\linewidth]{./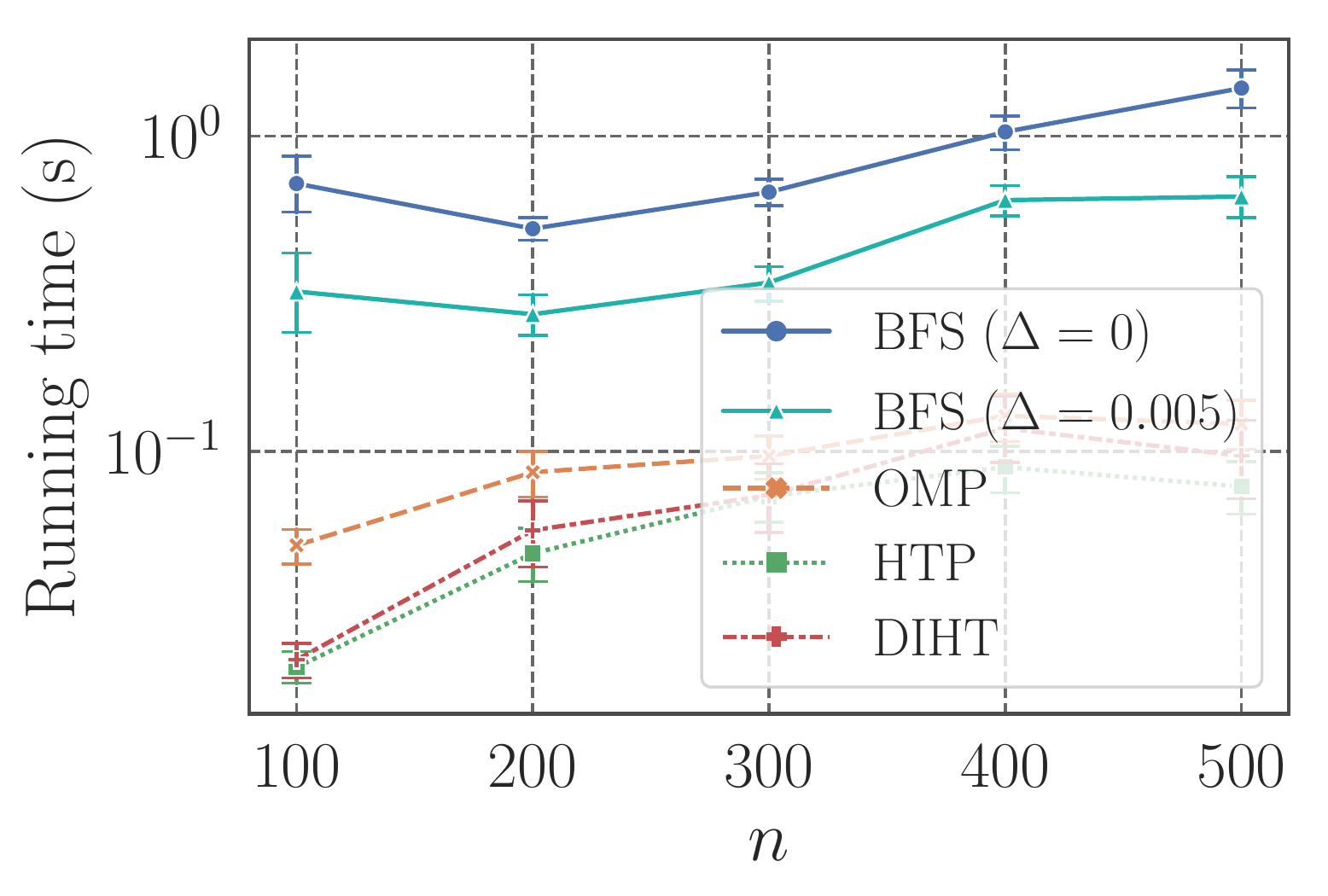}
		\subcaption{Logistic, Running Time}
		\label{fig:logistic_time_s}
	\end{minipage}
	\caption{
		PSSR and Running Times.
	} 
	\label{a_fig:pssr_plot}
\end{figure}

\end{document}